\numberwithin{equation}{section}
\numberwithin{theorem}{section}
\begin{document}

\title{\huge  {Statistical Depth for Big Functional Data with Application to Neuroimaging} 
}

\author{
Alicia Nieto-Reyes\thanks{Department de Mathematics, Statistics and Computer Science, University of Cantabria, Santander, Spain, E-39005 (E-mail: \emph{alicia.nieto@unican.es}).} \thanks{The authors gratefully acknowledge the help of Ci-Ren Jiang with the implementation of the FPCA deconvolution. A. N-R. also thanks the University of Cambridge for kind hospitality during the time this work was carried out. The research of A. N-R. was supported in part by the Spanish Ministry of Science, Innovation and Universities, grant MTM2017-86061-C2-2-P, and the MECD (Spain), grant CAS14/00375. The research of J.A. was supported by Engineering and Physical Sciences Research Council grant EP/K021672/2.} \and   John A. D. Aston \thanks{Statistical Laboratory, University of Cambridge, Wilberforce Road, Cambridge CB3 0WB, UK}
}



\maketitle

\begin{abstract}
Functional depth is the functional data analysis technique that orders a functional data set. Unlike the case of data on the real line, defining this order is non-trivial, and particularly, with functional data, there are a number of properties that any depth should satisfy. We propose a new depth which both satisfies the properties required of a functional depth but also one which can be used in the case where there are a very large number of functional observations or in the case where the observations are functions of several continuous variables (such as images, for example). We give theoretical justification for our choice, and evaluate our proposed depth through simulation. We finally apply the proposed depth to the problem of yielding a completely non-parametric deconvolution of Positron Emission Tomography (PET) data for a very large number of curves across the image, as well as to the problem of finding a representative subject from a set of PET scans.
\end{abstract}

Keywords: Big data, Functional data analysis, Imaging data, Positron Emission Tomography, Statistical depth, Unpreprocessed data.

\section{Introduction}\label{sectionIn}

Statistical depth is the technique that orders the elements of a space with respect to a dataset or distribution. This order is trivial on the real line but much less so in higher dimensions. For multivariate spaces, the notion of depth was formalized in \citet{Zuo00}, based on the properties defined in \citet{Liu90}, and prominent examples include the Tukey depth \citep{Tukey} and the simplicial depth \citep{Liu90}. For the last twenty years there have been several extensions of these definitions to cover high dimensional spaces and functional spaces (cf.~Section \ref{sectionOdFD}) and in \cite{NRBattey2015} the notion was formalized for functional spaces. There, six properties were established, which are needed for a functional depth to be useful. The first depth function to satisfy these properties was recently  prosed in \cite{Nieto21a}, and applied to real data in \cite{Nieto21b}. However, it is computationally expensive.
Thus, the aim of this paper is to present an example of statistical functional depth that is notable in various ways: It satisfies the six properties notion of functional depth and can be used for \textit{big} functional data in an efficient manner, where here, \textit{big} is used to mean any data where computational considerations are amongst the most pressing. Additionally, to our knowledge, this paper presents the first
 application of functional depth to raw functional data that are simultaneously continuous in several dimensions and have each datum observed in a different discretization subset of the domain, the importance of which will be demonstrated in its use in neuroimaging.

Overall, a statistical depth function takes its deepest value at a measure of location; commonly a generalized median although the literature includes a case of a generalized mode \citep{Cuevas07}. We restrict ourselves here to those taking deepest value at a generalized median for its suitability to our neuroimaging application.
These existing 
examples of functional depth  \citep[for instance]{ChakrabortyAoS2014, randomTukey, Romo09, Romo11} either do not satisfy the six properties of the notion of functional depth in their entirety or are infeasible when applied to \textit{big} data.  A property that is rarely satisfied by existing examples of functional depth regards the receptivity to the convex hull width across the domain of the depth function. This property essentially means that the notion of depth should be somewhat invariant to measurement errors in domains where there is little to no variation in the underlying functions. In fact, only one of these examples of depth is
known to satisfy this property, the metric depth \citep{Nieto21a}, but that depth is computationally infeasible for large amounts of data.
Focusing on the left plot of Figure \ref{Oc}, the idea behind this property is that the order of the curves in the region of the domain around the time corresponding to 4000 in the x-axis, where the data display little variability and significantly overlap with one another, should not carry undue influence in the overall order of the data provided by the depth. The left plot of Figure \ref{Oc} represents $100$ curves selected at random out of a dataset of $N=254,807$ curves of Positron Emission Tomography (PET) data. PET, the motivating application of the work in this paper, is a neuroimaging modality which can be used to investigate neurochemistry \textit{in-vivo}. A scan specific input injection of radiotracer is delivered to the brain via the blood plasma, and the resulting positron decay recorded by the scanner. In the right plot of Figure \ref{Oc}, the curves in the left plot are plotted in grey and in color the deepest curves amongst the $N$ according to different well-known notions of functional depth, including the proposed notion in this paper, are shown.


\begin{figure}[htb]
\begin{center}
\includegraphics[height=7cm,width=.4\linewidth]{./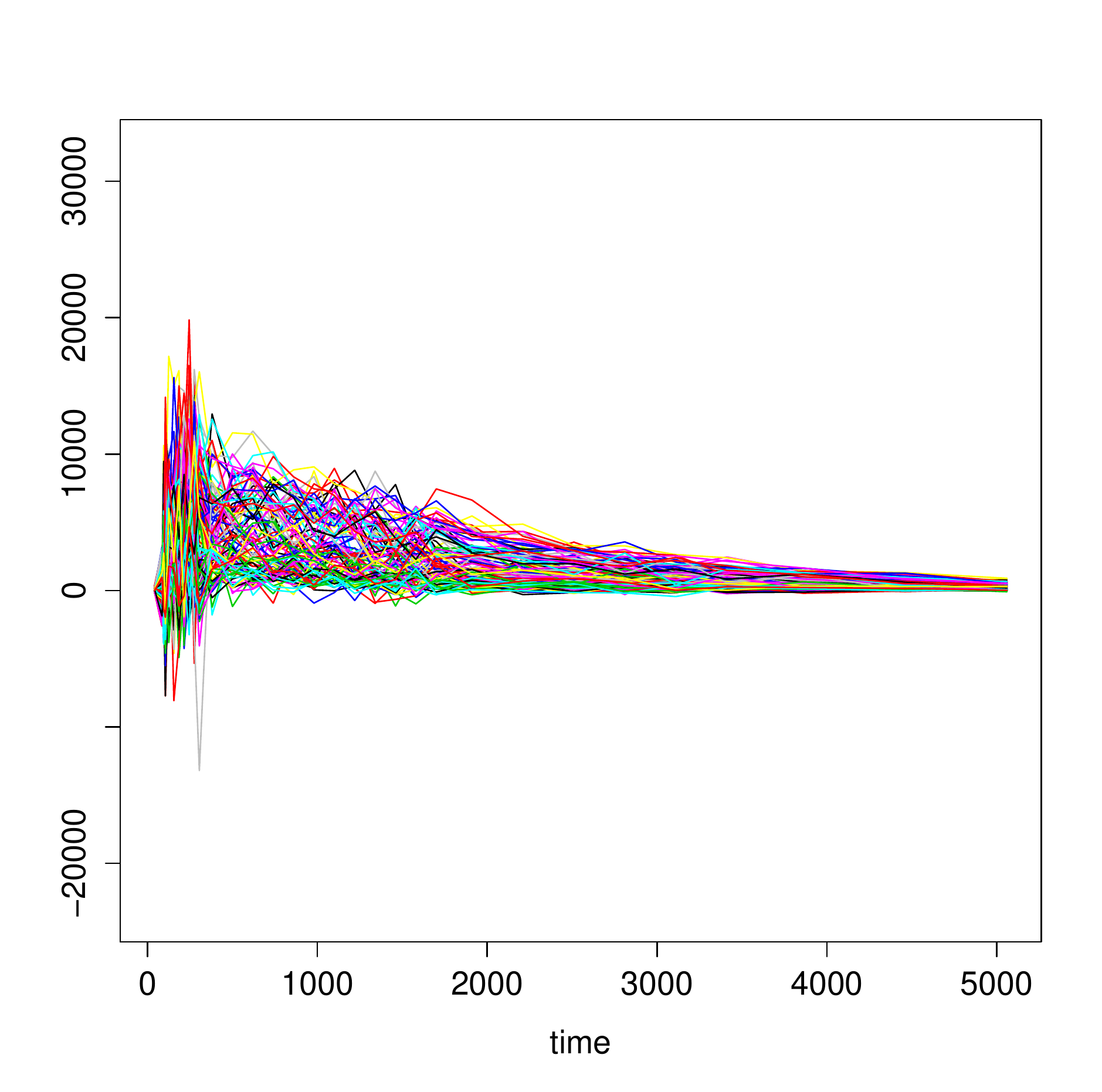}
\includegraphics[height=7cm,width=.4\linewidth]{./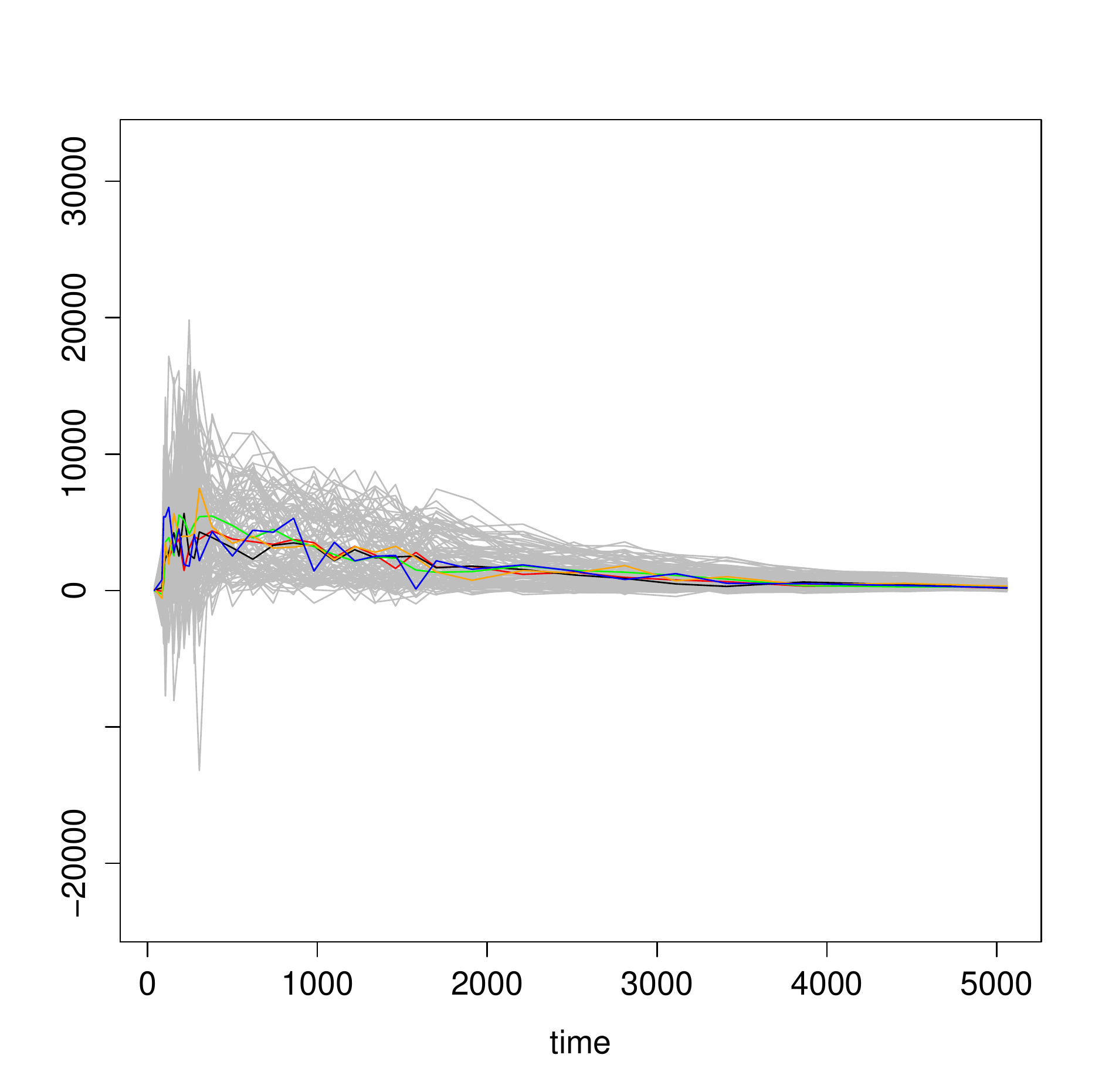}
\end{center}
\caption{$100$ curves selected at random out of a dataset of $N=254,807$ PET curves are represented in colour (left) and in grey (right). The right plot also represents the deepest curve among the $N$ according to different examples of functional depth. $D_I$ (red), $D_M$ (black), $D_B$ (green), $D_{T}$ (orange) and $D$ (blue) (cf. Sections \ref{sectionFD} and \ref{sectionOdFD}).}
\label{Oc}
\end{figure}

The proposed example of depth is defined for any semi-metric space as a random version of the metric depth, a functional depth based on the distances of the curves to a center of functional symmetry (cf.~Section \ref{sectionFD}). Data from PET typically consists of large four-dimensional volumes, where images of the brain are collected over time in an irregularly spaced fashion. This results in functional data which can be viewed in a number of ways; (i) as 1-D time curves collected over space; (ii) as 3-D functional image volumes collected over time;  or (iii) as full 4-D data collected over a number of subjects. Therefore, the depth we introduce need to be capable of dealing with both big functional data, i.e.~as in (i), and functional data that is simultaneously continuous in different dimensions, as in (ii) and (iii). For the first,  we propose an algorithmic definition to effectively compute the proposed random depth function. For the second, taking into account that data in (iii) has the 
generality that the subset of the domain in which each datum is partially observed differs for each datum,  in the present paper we construct an appropriate semi-metric to apply in this domain, the metric depth or the proposed definition when dealing with big data. In the literature there has yet to be a functional depth proposed that applies to the setting of the data being simultaneously continuous in different dimensions with each datum being observed in a different discretization subset of the domain without having to preprocess the data previously; although the random Tukey depth  \citep{randomTukey} could be applied if it were known that there existed a coherent distribution in which to project the data, as occurs in the multivariate case.
Using the proposed depth, this paper has two particular applications of functional depth in PET analysis. Firstly, it is used to investigate the deconvolution of dynamic neuroimaging data \citep{FPCA} and, second, to choose a representative subject of a dynamic neuroimaging dataset. Both applications are fundamental in the preprocessing of PET data (cf.~Section \ref{ARD}) and employing the proposed statistical depth function results in a first truly non-parametric preprocessing of PET data.

Section \ref{Ab} contains the required notation and background. 
To achieve these ends, in Section \ref{sectionFD} this paper proposes a new functional depth that emerges from the notion of metric depth. Although this depth is  proposed here in the context of neuroimaging data, it can be used in any framework. From a theoretical point of view, it is proved  here to be the first computationally feasible example to satisfy the definition of functional depth given in \cite{NRBattey2015}.  The proposed functional depth is shown to perform consistently well in comparison to any of the existing examples of functional depth, see Section \ref{sectionOdFD}. In Section  \ref{Simu} we run a series of simulations to investigate the performance of the proposed notion of functional depth for big data and its performance under noisy data. From a practical point of view, the proposed notion of depth is compared with the existing alternatives in the context of neuroimaging in Section \ref{ARD}, showing good performance of the proposed notion. Finally, a short conclusion is given in Section \ref{sectionDisc}. The computer code is available upon request.



\section{Notation and background on functional depth}\label{Ab}


$\mathfrak{F}$ denotes a functional metric space with associated metric $d(\cdot,\cdot)$ and $I\subseteq \RR^p$  a compact subset that is the domain of definition of the elements in $\mathfrak{F}.$
 $P$ denotes a distribution in $\mathcal{P},$ the space of all probability measures on $\mathfrak{F},$  and $\mathcal{S}:=\mathcal{S}_P$ the support of $P.$  
 $(\mathfrak{F},\mathcal{A},P)$ represents a probability space  and 
$\mathfrak{C}(\mathfrak{F},P)$ the convex hull of $\mathfrak{F}$ with respect to $P$ as defined in  \citet{NRBattey2015}.

As commented in the introduction, a main  objective of this paper is to find a functional depth that satisfies the six defining properties of statistical functional depth.
 According to \citet{NRBattey2015, rCorrectS}, the mapping
$D(\cdot,\cdot):\mathfrak{F}\times\mathcal{P}\longrightarrow \mathbb{R}$ is a \emph{statistical functional depth} if it satisfies
properties P-1. to P-6., below.

P-1.  \emph{Distance invariance}. $D(f(x),P_{f(X)})=D(x,P_{X})$ for any $x\in\mathfrak{F}$ and $f:\mathfrak{F}\rightarrow\mathfrak{F}$ such that for any $y\in\mathfrak{F}$, $d(f(x),f(y))=a_f \cdot d(x,y)$, with $a_f>0$.

P-2. \emph{Maximality at centre}.
For any $P\in\mathcal{P}$ possessing a unique centre of symmetry  $\theta\in\mathfrak{F}$ with respect to some notion of functional symmetry, $D(\theta,P)=\sup_{x\in\mathfrak{F}}D(x,P)$.
%

P-3. \emph{Strictly decreasing with respect to the deepest point}. 
For any $P\in\mathcal{P}$ such that $D(z,P)=\max_{x\in\mathfrak{F}}D(x,P)$ exists with $D(z,P)=D(z',P)$ implying $d(z,z')=0,$ $D(x,P)< D(y,P)<D(z,P)$ holds for any $x,y\in\mathfrak{F}$  such that 
$\min\{d(y,z),d(y,x)\}>0 \mbox{ and }  \max\{d(y,z),d(y,x)\}<d(x,z).$

P-4. \emph{Upper semi-continuity in $x$}. $D(x,P)$ is upper semi-continuous as a function of $x,$ i.e., for all $x\in\mathfrak{F}$ and for all $\epsilon>0$, there exists a $\eta>0$ such that 
%
$\sup_{y \in\mathfrak{F}_{x} \; : \; d(x,y)<\eta}D(y,P)\leq D(x,P)+\epsilon,$
%
where $\displaystyle{\mathfrak{F}_{x}:=\Bigl\{y\in \mathfrak{F}: d(y,x)<d(y,\theta) \text{ or }\max\{d(y,\theta),d(y,x)\}<d(x,\theta)\Bigr\}}$ for $\theta
\in
\arg\sup_{x\in\mathfrak{F}}D(x,P)$.

P-5. \emph{Receptivity to convex hull width across the domain.} $D(x,P_X)<D(f(x),P_{f(X)})$ for any $x\in \mathfrak{C}(\mathfrak{F},P)\backslash 0$ with $D(x,P)<\sup_{y\in\mathfrak{F}}D(y,P)$ and $f:\mathfrak{F}\rightarrow\mathfrak{F}$ such that $f(y(v))=\alpha(v) y(v)$ with $\alpha(v)\in (0,1)$ for all $v\in L_\eta$ and $\alpha(v)=1$ for all $v\in L^c_\eta.$ 
$
L_{\eta}:=\arg\sup_{H\subseteq I} \Bigl\{|H|: \sup_{x,y\in\mathfrak{C}(\mathfrak{F},P)} d(x(H),y(H))\leq\eta \Bigr\}
$
for any $\eta\in[\inf_{v\in 
I
} d(L(v),U(v)),d(L,U))\cap(0,\infty)$ such that $\lambda(L_{\eta})>0$ and $\lambda(L^c_{\eta})>0,$ with $|H|$ denoting the length of $H.$

P-6. \emph{Continuity in $P$}. For all $x\in\mathfrak{F}$, for all $P\in\mathcal{P}$ and for every $\epsilon>0$, there exists a $\eta(\epsilon)>0$ such that $|D(x,Q)-D(x,P)|<\epsilon$ $P$-almost surely for all $Q\in\mathcal{P}$ with $d_{\mathcal{P}}(Q,P)<\eta$ $P$-almost surely, where $d_{\mathcal{P}}$ metricises the topology of weak convergence.

Instead of P-2, in \cite{NRBattey2015} it is used the following property.

P-2G. \emph{Maximality at Gaussian process mean}.
For $P$ a zero-mean, stationary, almost surely continuous Gaussian process on $I$, $D(\theta,P)=\sup_{x\in\mathfrak{F}}D(x,P)\neq\inf_{x\in\mathfrak{F}}D(x,P)$, where $\theta$ is the zero mean function.

P-2. regards a symmetric distribution and a center of symmetry of it, so we recall the one used in  \cite{Nieto21a}.
 A probability distribution $P$ on $\mathfrak{F}$ with support $\mathcal{S}$ is  \emph{$d$-symmetric}  about $z:=z(P)\in \mathcal{S}$ if  
 \begin{eqnarray}\label{cfhs}
 P(\mathfrak{H}^{z}_x)\geq\frac{1}{2} \mbox{ for all } x\in\mathcal{S}, 
 \end{eqnarray}
 where 
$ 
\mathfrak{H}^{z}_x:=\mathfrak{H}^{z}_x(P)=\bigl\{y\in \mathcal{S}: d(y,x)\geq\max\{d(x,\theta),d(y,\theta)\}\bigr\}.
$ 
 Additionally, denoting
 \begin{eqnarray} \label{cs}
\Theta:=\Theta(P)=\{z\in \mathcal{S}: P(\mathfrak{H}^{z}_x)\geq \frac{1}{2}\mbox{ for all }x\in \mathcal{S}\},
 \end{eqnarray}
any $ \theta\in\Theta$  is a  \emph{centre of $d$-symmetry} of $\mathfrak{F}$  with respect to $P.$ 
Substituting the inequality in \eqref{cfhs} by $P(\mathfrak{H}^{z}_x)\geq \frac{1}{2}-\delta$ with $\delta\in [0,\frac{1}{2}]$ results in the notion of \emph{$(d,\delta)$-symmetry}; and the analogous $\theta\in\Theta$ in  \eqref{cs} is respectively referred as a  center of $(d,\delta)$-symmetry.

Our  depth proposal is based on the following. 
Let $P$ be a probability distribution on $\mathfrak{F}$ with  a non-empty set of centers of $(d,\delta)$-symmetry $\Theta.$  The \emph{metric depth} at $x\in\mathfrak{F}$ with respect to $P$ is 
$ 
D_{\mbox{m}}(x,P)=\{I_{m}(x,P) + 1\}^{-1}, \mbox{ where }
I_m(x,P):=    d(x,\Theta)/d(\vartheta,\vartheta'),
$ 
$\vartheta:=    \vartheta(P),\vartheta':=   \vartheta'(P)\in\mathfrak{F}$ independent of $x$ with $d(\vartheta,\vartheta')>0.$
Note that  
\begin{eqnarray}\label{dC} 
d(x,\Theta):= \inf_{\theta\in\Theta}d(x,\theta).
\end{eqnarray}
The metric depth  satisfies  the  axiomatic definition for  the below conditions, on the selection of $\vartheta,\vartheta'\in\mathfrak{F}$
and   on the metric. 
\begin{assumption}
\label{condV} 
For all $\varepsilon>0,$ there exists an $\eta>0$ satisfying
$ 
|d\{\vartheta(P),\vartheta'(P)\}-d\{\vartheta(Q),\vartheta'(Q)\}|<\varepsilon\mbox{ for all }Q\in\mathcal{P}\mbox{ with }d_{\mathcal{P}}(Q,P)<\eta.
$ 
\end{assumption}
\begin{assumption}
\label{condD} Let $\mathcal{B}\subset I$ of positive Lebesgue measure and $g: \mathfrak{F}\rightarrow \mathfrak{F}$ such that for any $x\in\mathfrak{F}$ and $v\in I,$ $g(x(v))=\alpha(v)x(v)$ where $\alpha(v)\in(0,1]$ with $\alpha(v)<1$ for all $v\in\mathcal{B}$. For any $\mathcal{B}$ and $g$ as above,  $d(x,y)>d\{g(x),g(y)\}$  for all $x,y\in\mathfrak{F}$ with $x(v)\neq y(v)$ for all $v\in\mathcal{B}.$  
\end{assumption}

\section{Methodology}\label{sectionFD}

In dealing with big functional data, our proposal, the \emph{random depth},  computes the distance with respect to a random center of 
symmetry.  The reason for using a random center is that it can be computed extremely fast while the resulting random depth satisfies the six defining properties of functional depth, Theorem \ref{Td}, and converges to the metric depth, Theorem \ref{DMD}. 
All the definitions are written in terms of metric spaces for ease of reading, however, analogous definitions can be provided in terms of pseudo-metric spaces. After establishing the theoretical properties of the proposed random depth (Section \ref{Tp}), we devote Subsection \ref{FMD} to elaborate on the random depth, and the metric depth, in the context of multiple functional dimensions.

\subsection{Definition of random depth}



\begin{definition}[Random center of symmetry]\label{rc}
Let $(\mathfrak{F},d)$ be a functional metric space, $P$ a probability distribution on $\mathfrak{F}$ with support $S$ and $\{P_n\}_n$ and $\{P_m\}_m$ two independent sequences of empirical distributions of $P.$ 
   Let us denote $$\Theta_{n,m}:=\Theta_{n,m}(P)=\{z\in S: P_n(\mathfrak{H}^{z}_{x})\geq \frac{1}{2} \mbox{ for all }x\in S_m\}$$
   where   $S_m$ is the support of $P_m.$
Then, any $\theta_{n,m}\in\Theta_{n,m},$ if it exists,  is a \emph{random center of symmetry} of $\mathfrak{F}$ with respect to $P$ based on $P_n$ and $P_m.$
 \end{definition}


The existence of at least a random center of symmetry corresponds to $P$ being randomly symmetric.

\begin{definition}[Random symmetry]\label{rs}
Let $(\mathfrak{F},d)$ be a functional metric space, $P$ a probability distribution  on $\mathfrak{F}$ with support $S$ and $\{P_n\}_n$ and $\{P_m\}_m$ two independent sequences  of empirical distributions of $P.$
Then,  $P$ is \emph{randomly functional  symmetric}, or just \emph{randomly symmetric,} about $\theta_{n,m}
\in S$  
based on $P_n$ and $P_m,$ 
if
$\displaystyle P_n(\mathfrak{H}^{\theta_{n,m}}_{x})\geq 1/2,  \mbox{    for all    } x\in S_m,$ with $S_m$ the support of $P_m.$
\end{definition}

The use of the name random functional symmetry arises because our random proposal is derived from the notion of  $(d,\delta)$-symmetry  \citep{Nieto21a} that is intended for functional spaces.
Substituting
the inequality in Definition \ref{rs} by $P_n(\mathfrak{H}^{\theta_{n,m}}_{x})\geq 1/2-\delta$ with $\delta\in [0,\frac{1}{2}]$ results in the notion of \emph{$\delta$-random symmetry}. We define  the analogous $\theta_{n,m}$ of Definition \ref{rc}  as a  \emph{center of $\delta$-random symetry}. Note that a distribution $P$ is always $\delta$-randomly symmetric for some $\delta\in [0,\frac{1}{2}].$ 
Definition  \ref{rd}, below,  also applies in the case of  $\theta_{n,m}$ being the center of $\delta$-random symmetry of $P.$ Consequently, a random depth can be computed  without any imposition on $P.$ As the distributions studied in this paper are randomly symmetric (cf.~Section \ref{ARD}), it suffices for us to work with the notion of random symmetry.

\begin{definition}[Random depth]\label{rd}
Let $(\mathfrak{F},d)$ be a functional metric space and $P$  a 
probability distribution on $\mathfrak{F}$ with  a random center of symmetry $\theta_{n,m}.$   The \emph{random depth} at $x\in\mathfrak{F}$ with respect to $P$ based on  $\theta_{n,m}$  is
$$D(x,P):=\left[1+\displaystyle\frac{d(x,\theta_{n,m})}{d(\vartheta_{n,m},\vartheta_{n,m}')}\right]^{-1},$$
where
$\vartheta_{n,m}:=    \vartheta_{n,m}(P),\vartheta_{n,m}':=   \vartheta_{n,m}'(P)\in\mathfrak{F}$ are independent of $x$ with $d(\vartheta_{n,m},\vartheta_{n,m}')>0.$
 \end{definition}
 $\Theta_{n,m}$  can have more than one element, thus, the random depth depends on the element selected by the practitioner. 
 The definition also uses  a distance as well as two functions of the distribution, $\vartheta$ and $\vartheta'.$ Their selection is key in the depth satisfying the six properties, as we see in the next subsection. However, many possibilities apply. One of them is established in the following assumption, where $\vartheta$ is as in the metric depth.
 \begin{assumption}\label{A3}
$\vartheta_{n,m}'=\theta_{n,m}$ and $\vartheta_{n,m}$ is a function of $\Theta_{n,m}$ such that $d(\vartheta,\vartheta_{n,m})\rightarrow_{P} 0.$ 
 \end{assumption}
The random depth takes value one at the random center of symmetry with respect to which it is computed and decreases for values farther away from the center in terms of the distance used. The function can be easily transform to, for instance, take value zero at the random center of symmetry and increase for values farther away from the center in terms of the distance used.
 In the next subsection,  we will see that not only is the random depth a computationally effective approximation of the metric depth, but also a statistical functional depth on its own right.

A more straightforward definition for random depth is, however,
$$D_r(x,P):=[1+d(x,\theta_{n,m})]^{-1},$$
though this is in general not a valid definition because, as we see in next subsection,  it does not satisfy property P-1.~in the definition of statistical functional depth. However, when the application under study involves only one probability distribution its use is equivalent; as it occurs to the metric depth \citep{Nieto21b}.
\subsection{Theoretical properties}\label{Tp}

Given $P_m$ an empirical distribution, of a distribution $P,$ with support $S_m,$ let us use the notation $\Theta_{,m}:=\{z\in S:P(\mathfrak{H}^z_{x})\geq 1/2$ for all $x\in S_m\}.$ 
Using \eqref{cs}, it is worth noticing that $\Theta_{,m}\neq\emptyset$ if  $P$ is a   $d$-symmetric distribution. This is due to $\Theta\subseteq\Theta_{,m}.$   As  $\Theta_{,m}\supseteq \Theta_{,m+1}\supseteq \cdots \supseteq \Theta,$ we have that $\theta\in\Theta_{,m}$ for any $\theta$ center of $d$-symmetry of $\mathfrak{F}$ with respect to $P$ and that $\lim_{m\rightarrow\infty}\Theta_{,m}=\Theta$ $P$-almost surely. $\Theta_{n,m}$ can be empty though. However,  the larger  that $P(\mathfrak{H}^z_{x})$ is than .5, for a given $z$ and $x,$ and the larger $n$ is, the more probable that   $P_n(\mathfrak{H}^z_{x})$ is larger than .5; and consequently, the more probable that $\Theta_{n,m}$ is non-empty. 

The properties established in this section are important from a theoretical point of view and in practice for our later application in neuroimaging. The proofs of the results, however, follow smoothly from existing results in the literature.
Next Theorems \ref{td} and \ref{prima} entails an abuse of notation in that we also write ${d_{H}\{\Theta_{n,m},\Theta\}\rightarrow_{p} 0}$ when $d_{H}$ is constructed on empty sets.

\begin{theorem}\label{td} Let $(\mathfrak{F},d)$ be a functional metric space and $P$ a probability distribution on $\mathfrak{F}$ with compact support $S.$ 
 Then, $$d_{H}\{\Theta_{n,m},\Theta\}\rightarrow_{P} 0,$$ where, for arbitrary sets $A,$ $B\subseteq\mathfrak{F},$  $d_{H}(A,B)=\max\Bigl\{\sup_{a\in A} d(a,B), \sup_{b\in B} d(b,A)\Bigr\},$ with $d(a,B)$ as in \eqref{dC}.
  \end{theorem}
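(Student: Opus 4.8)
The plan is to exploit the nested structure $\Theta_{,m}\supseteq\Theta_{,m+1}\supseteq\cdots\supseteq\Theta$ (already noted in the text) together with a Glivenko--Cantelli type argument that controls the discrepancy between the empirical halfspace mass $P_n(\mathfrak{H}^z_x)$ and the true mass $P(\mathfrak{H}^z_x)$, uniformly in $z$ and $x$ over the compact support $S$. In effect I would factor the convergence through the deterministic sequence $\Theta_{,m}$: show first that $d_H\{\Theta_{,m},\Theta\}\to 0$ (a purely set-theoretic compactness statement), then show that $d_H\{\Theta_{n,m},\Theta_{,m}\}\to_P 0$ by controlling the sampling error in the defining inequality $P_n(\mathfrak{H}^z_x)\ge 1/2$. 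Combining the two via the triangle inequality for the Hausdorff metric gives the claim. This is essentially the route taken for the analogous consistency statement for the metric depth in \cite{Nieto21a}, which is why the authors remark that the proof ``follows smoothly from existing results.''

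First I would establish $d_H\{\Theta_{,m},\Theta\}\to 0$ $P$-a.s. Since the $\Theta_{,m}$ are nested closed subsets of the compact set $S$ with intersection $\bigcap_m\Theta_{,m}=\Theta$ (the support points $S_m$ are eventually dense in $S$, so any $z$ satisfying the inequality for all $x\in S_m$ for every $m$ satisfies it for all $x\in S$ by upper semicontinuity of $x\mapsto P(\mathfrak{H}^z_x)$ from the right, via Fatou/portmanteau applied to the closed sets $\mathfrak{H}^z_x$), a standard compactness lemma (nested nonempty compacts shrinking to their intersection do so in Hausdorff distance) yields the convergence. One must handle the $\Theta=\emptyset$ case — but if $\Theta\ne\emptyset$ then $\Theta_{,m}\ne\emptyset$ for all $m$, and the abuse-of-notation convention for $d_H$ on empty sets mentioned before the theorem statement covers the degenerate situation.

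Next I would handle the random part: $d_H\{\Theta_{n,m},\Theta_{,m}\}\to_P 0$. The key tool is uniform convergence $\sup_{z,x\in S}|P_n(\mathfrak{H}^z_x)-P(\mathfrak{H}^z_x)|\to_P 0$, which holds because the class of sets $\{\mathfrak{H}^z_x: z,x\in S\}$ is a Glivenko--Cantelli class under the stated hypotheses (it is indexed by two points in a compact metric space and defined through the metric $d$; one invokes the measurability/entropy conditions implicitly assumed, as in the cited papers). Given this, any $z\in\Theta_{n,m}$ has $P(\mathfrak{H}^z_x)\ge 1/2 - o_P(1)$ for all $x\in S_m$, placing it within Hausdorff distance $o_P(1)$ of a point of $\Theta_{,m}$ (here a mild regularity condition — that the map sending a slackness level $\varepsilon$ to the $\varepsilon$-relaxed set of symmetry centers is continuous at $0$ — is what converts the mass discrepancy into a metric discrepancy; this is exactly the role of the $(d,\delta)$-symmetry formalism); conversely any $z\in\Theta_{,m}$ satisfies $P_n(\mathfrak{H}^z_x)\ge 1/2-o_P(1)$, and a symmetric argument (or the fact that $\Theta\subseteq\Theta_{n,m}$ with high probability when $n$ is large, since a strict majority inequality is preserved under small perturbations) controls the other direction of the supremum. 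The triangle inequality $d_H\{\Theta_{n,m},\Theta\}\le d_H\{\Theta_{n,m},\Theta_{,m}\}+d_H\{\Theta_{,m},\Theta\}$ then finishes the proof, after sending $n\to\infty$ and $m\to\infty$ appropriately (e.g.\ along a diagonal subsequence, or simply noting both terms are $o_P(1)$).

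The main obstacle is the second step — more precisely, the ``quantitative'' conversion from $|P_n(\mathfrak{H}^z_x)-P(\mathfrak{H}^z_x)|$ small to $d_H\{\Theta_{n,m},\Theta_{,m}\}$ small. Uniform convergence of the empirical measure on the halfspace-type class gives control of the defining inequalities, but one still needs that relaxing the threshold from $1/2$ to $1/2-\varepsilon$ does not blow up the solution set; without some such stability/identifiability hypothesis on $P$ near its center(s) of symmetry, the set-valued map could jump. I expect this to be imported wholesale from the metric-depth consistency result in \cite{Nieto21a} rather than reproved, consistent with the authors' stated intention. The empty-set bookkeeping and the order of the double limit in $n,m$ are minor and handled by the stated conventions.
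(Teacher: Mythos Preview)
Your proposal is correct and follows essentially the same route as the paper: the same triangular decomposition $d_H(\Theta,\Theta_{n,m})\le d_H(\Theta,\Theta_{,m})+d_H(\Theta_{,m},\Theta_{n,m})$, then control of each term separately. The only minor differences are in the filling-in: for the first term the paper bounds $d_H(\Theta,\Theta_{,m})\le d_H(S,S_m)$ directly from the set-inclusion characterisation rather than invoking a nested-compacts lemma, and for the second term the paper appeals to pointwise weak convergence $P_n(\mathfrak{H}^z_x)\to P(\mathfrak{H}^z_x)$ for each fixed $x\in S_m$ (which suffices since $S_m$ is a finite set) rather than a uniform Glivenko--Cantelli bound, so your machinery is slightly heavier than what is actually used.
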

 The assumption of compactness of $\mathcal{S}$ is made for technical convenience. This assumption can be relaxed by controlling the probability of $X$ outside of a compact subset of $\mathcal{S}$.  Although different, this proof follows along the lines of \citet[Theorem 2]{Nieto21a}.
 \begin{proof}
 By the triangular inequality, we have that $$d_H(\Theta,\Theta_{n,m})\leq d_H(\Theta,\Theta_{,m})+  d_H(\Theta_{,m},\Theta_{n,m}).$$ 
To prove that
$d_{H}\bigl(\Theta,\Theta_{,m}\bigr)\rightarrow_P 0,$ we  use  the compactness of $\mathcal{S}$, which ensures $d_{H}\bigl(S_{m},S\bigr)\rightarrow_P 0$. Since $\mathcal{S}_{m} \subseteq \mathcal{S}$, we have that $\Theta \subseteq \Theta_{,m}$, with equality if and only if  
$$\{x\in \mathcal{S} \cap \mathcal{S}_{m}^{c}: P(\mathfrak{H}_{x}^{z})< \frac{1}{2}  \mbox{ for all } z\in  \Theta_{,m}\}=\emptyset.$$ We thus have $d_{H}(\Theta, \Theta_{,m}) \leq d_{H}(\mathcal{S},\mathcal{S}_{m}) \rightarrow_{P} 0$ as  $m\rightarrow\infty.$

For the control over $d_H(\Theta_{,m},\Theta_{n,m}),$
as the ball $\sigma$-algebra coincides with the Borel $\sigma$-algebra on separable semi-metric spaces \cite[chapter 1.7]{rvv}, weak convergence of $P_{n}$ to $P$ gurantees that $P_{n}(\mathfrak{H}_{x}^{z})\rightarrow P(\mathfrak{H}_{x}^{z})$ for any $x\in\mathcal{S}_{m}$ as $n \rightarrow \infty$, 
resulting in $d_H(\Theta_{,m},\Theta_{n,m})\rightarrow_{P} 0$ as $n\rightarrow\infty.$ Consequently,  $d_H(\Theta,\Theta_{n,m})\rightarrow_{P} 0.$
 \end{proof}

 A consequence of the proof is the following result.
 \begin{corollary}
 In the setting of Theorem \ref{td}, $d_{H}\{\Theta_{n,m},\Theta_{,m}\}\rightarrow_{P} 0$ 
 and $d_{H}\{\Theta,\Theta_{,m}\}\rightarrow_{P} 0.$
 \end{corollary}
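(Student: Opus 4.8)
The plan is to read the corollary directly off the two intermediate bounds already obtained inside the proof of Theorem \ref{td}, since nothing beyond those estimates is actually needed.

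First I would recall the triangle-inequality decomposition used in the theorem, $d_H(\Theta,\Theta_{n,m})\leq d_H(\Theta,\Theta_{,m})+d_H(\Theta_{,m},\Theta_{n,m})$, and note that the proof of Theorem \ref{td} establishes each summand converges to $0$ in probability \emph{separately}: the bound $d_H(\Theta,\Theta_{,m})\leq d_H(\mathcal{S},\mathcal{S}_m)\to_P 0$ uses only compactness of $\mathcal{S}$ and the inclusion $\mathcal{S}_m\subseteq\mathcal{S}$, giving $d_H\{\Theta,\Theta_{,m}\}\to_P 0$; and the weak-convergence argument (via the coincidence of the ball and Borel $\sigma$-algebras on separable semi-metric spaces, so that $P_n(\mathfrak{H}_x^z)\to P(\mathfrak{H}_x^z)$ for every $x\in\mathcal{S}_m$) gives $d_H\{\Theta_{,m},\Theta_{n,m}\}\to_P 0$. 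Both of these are exactly the two claims of the corollary, so the corollary is genuinely ``a consequence of the proof'' rather than requiring a fresh argument.

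Second, I would address the one subtlety worth spelling out: the order of limits and the abuse of notation flagged before Theorem \ref{td} regarding $d_H$ on possibly empty sets. The statement $d_H\{\Theta,\Theta_{,m}\}\to_P 0$ is a limit as $m\to\infty$ with $\Theta_{,m}$ always non-empty whenever $\Theta\neq\emptyset$ (by $\Theta\subseteq\Theta_{,m}$), so it is unambiguous. The statement $d_H\{\Theta_{n,m},\Theta_{,m}\}\to_P 0$ is the genuinely random one, since $\Theta_{n,m}$ may be empty; I would simply point out that this is handled by the same convention adopted in the theorem, and that the convergence is in the appropriate joint sense ($n,m\to\infty$, or $n\to\infty$ for each fixed $m$ and then $m\to\infty$), as already used. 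No new estimate is introduced.

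I do not expect a real obstacle here — the corollary is essentially a restatement of the internal lemmas of the previous proof, so the ``hard part'' is only cosmetic: making sure the quantifiers and the empty-set convention are stated consistently with Theorem \ref{td} so that the reader sees the two displayed convergences were literally proved en route. Accordingly the proof can be a one- or two-line pointer: ``This was shown in the course of proving Theorem \ref{td}; see the two displayed bounds $d_H(\Theta,\Theta_{,m})\leq d_H(\mathcal{S},\mathcal{S}_m)\to_P 0$ and $d_H(\Theta_{,m},\Theta_{n,m})\to_P 0$ therein.''
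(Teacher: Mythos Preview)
Your proposal is correct and matches the paper exactly: the paper gives no separate proof for this corollary, merely introducing it with ``A consequence of the proof is the following result,'' and your identification of the two intermediate bounds $d_H(\Theta,\Theta_{,m})\leq d_H(\mathcal{S},\mathcal{S}_m)\to_P 0$ and $d_H(\Theta_{,m},\Theta_{n,m})\to_P 0$ from the proof of Theorem~\ref{td} is precisely what is intended.
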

 
 Using a random procedure means that each time a center of symmetry is computed a different result is generally obtained. However, as we see in below Corollary \ref{prima}, the distance amongst two sets of centers of symmetry, based on independent draws from the distribution at hand, converges to zero. The use of random procedures is not novel in the literature of data depth \citep{randomTukey}, nor in general in the  literature of high-dimensional, or functional, data \citep{Samworth}.
 \begin{corollary}\label{prima} In the setting of Theorem \ref{td}, let $\{P_n\}_n,$  $\{P_m\}_m,$ $\{P'_n\}_n $ and $\{P'_m\}_m$ be four independent sequences  of empirical distributions of $P,$ with associated subsets $\Theta_{n,m}$ and $\Theta'_{n,m}.$
 Then, $d_{H}\{\Theta_{n,m},\Theta'_{n,m}\}\rightarrow_{P} 0.$
  \end{corollary}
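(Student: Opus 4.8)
The plan is to deduce the statement directly from Theorem \ref{td} together with the triangle inequality for the Hausdorff (pseudo-)distance $d_{H}$. First I would note that $\{P_n\}_n$ and $\{P_m\}_m$ are, by hypothesis, two independent sequences of empirical distributions of $P$, so Theorem \ref{td} applies verbatim and yields $d_{H}\{\Theta_{n,m},\Theta\}\rightarrow_{P}0$; likewise, since $\{P'_n\}_n$ and $\{P'_m\}_m$ are independent empirical sequences of $P$, the same theorem gives $d_{H}\{\Theta'_{n,m},\Theta\}\rightarrow_{P}0$. The point to emphasise is that the limiting object $\Theta=\Theta(P)$ in both statements is one and the same \emph{deterministic} set, so the two convergences are to a common target; this is exactly what makes the independence across the two pairs of sequences a non-issue for the conclusion.

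Next I would invoke the triangle inequality $d_{H}\{\Theta_{n,m},\Theta'_{n,m}\}\leq d_{H}\{\Theta_{n,m},\Theta\}+d_{H}\{\Theta,\Theta'_{n,m}\}$, which holds for $d_{H}$ on subsets of $\mathfrak{F}$. Combined with the two displays above and the elementary inclusion, for any $\varepsilon>0$, $\{d_{H}\{\Theta_{n,m},\Theta'_{n,m}\}>\varepsilon\}\subseteq\{d_{H}\{\Theta_{n,m},\Theta\}>\varepsilon/2\}\cup\{d_{H}\{\Theta,\Theta'_{n,m}\}>\varepsilon/2\}$, a union bound gives $\mathbb{P}\bigl(d_{H}\{\Theta_{n,m},\Theta'_{n,m}\}>\varepsilon\bigr)\to 0$, which is the assertion. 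No joint control of $(\Theta_{n,m},\Theta'_{n,m})$ is required precisely because the intermediate set $\Theta$ is non-random.

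The only point requiring a little care is the abuse of notation flagged before Theorem \ref{td}: $\Theta_{n,m}$ and $\Theta'_{n,m}$ may be empty, so $d_{H}$ must be read with the same convention adopted there when an argument is empty. With that convention the triangle-inequality step still goes through, since the events on which $\Theta_{n,m}$ or $\Theta'_{n,m}$ is empty are already subsumed in $\{d_{H}\{\Theta_{n,m},\Theta\}>\varepsilon/2\}$ and $\{d_{H}\{\Theta,\Theta'_{n,m}\}>\varepsilon/2\}$, whose probabilities vanish by Theorem \ref{td}. I expect this bookkeeping around emptiness to be the only mildly delicate part of the argument; the remainder is the routine triangle-inequality-plus-union-bound reduction.
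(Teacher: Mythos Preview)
Your proposal is correct and is exactly the argument the paper has in mind: the corollary is stated without proof precisely because it follows immediately from two applications of Theorem \ref{td} (one for each independent pair of empirical sequences) combined with the triangle inequality for $d_{H}$, using that the common target $\Theta=\Theta(P)$ is deterministic. Your remark on the empty-set convention is also in line with the paper's own caveat preceding Theorem \ref{td}.
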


Theorem \ref{DMD} establishes the convergence of the random depth to the metric depth. 
\begin{theorem}\label{DMD}
Let $(\mathfrak{F},d)$ be a functional metric space, $P$ a $d$-symmetric probability distribution on $\mathfrak{F}$ and $x\in\mathfrak{F}.$   Let us  assume there exists a sequence of non-empty compact sets $\{\Theta_{n,m}\}_{n,m}.$ Then, there exists a sequence $\{\theta_{n,m}\}_{n,m},$ with $\theta_{n,m}\in\Theta_{n,m}$ for all $(n,m)\in\mathbb{N}\times\mathbb{N}$ such that  $$d(\theta_{n,m},\Theta)\rightarrow_P 0\mbox{ and  }
|D(x,P)-D_{\mbox{m}}(x,P)|\rightarrow_P 0,$$ where D is computed with respect to $\theta_{n,m}$ and $\vartheta_{n,m}$ and $\vartheta_{n,m}'$ fulfill Assumption \ref{A3}.
\end{theorem}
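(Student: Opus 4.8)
The plan is to reduce the statement about the random depth to the already-established convergence of the centers of symmetry, namely Theorem~\ref{td} and its Corollary, together with Assumption~\ref{A3}. First I would use the hypothesis that $\{\Theta_{n,m}\}_{n,m}$ is a sequence of non-empty compact sets to extract, for each $(n,m)$, a point $\theta_{n,m}\in\Theta_{n,m}$ realizing (or nearly realizing) the infimum $d(\theta_{n,m},\Theta)=\inf_{\theta\in\Theta} d(\theta_{n,m},\theta)$; compactness of $\Theta_{n,m}$ makes this selection legitimate and gives a genuine sequence. Since $d(\theta_{n,m},\Theta)\le \sup_{a\in\Theta_{n,m}} d(a,\Theta)\le d_H(\Theta_{n,m},\Theta)$, Theorem~\ref{td} immediately yields $d(\theta_{n,m},\Theta)\to_P 0$, which is the first claimed conclusion. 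Note the $d$-symmetry hypothesis on $P$ is what guarantees $\Theta\neq\emptyset$, so that $d(\cdot,\Theta)$ is well-defined and the distances above are finite.

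Next I would handle the depth difference. Writing $D(x,P)=\bigl[1+d(x,\theta_{n,m})/d(\vartheta_{n,m},\vartheta_{n,m}')\bigr]^{-1}$ and $D_{\mbox{m}}(x,P)=\bigl[1+d(x,\Theta)/d(\vartheta,\vartheta')\bigr]^{-1}$, the function $t\mapsto (1+t)^{-1}$ is Lipschitz (with constant $1$) on $[0,\infty)$, so it suffices to show $d(x,\theta_{n,m})/d(\vartheta_{n,m},\vartheta_{n,m}')\to_P d(x,\Theta)/d(\vartheta,\vartheta')$. For the numerator, the triangle inequality gives $|d(x,\theta_{n,m})-d(x,\Theta)|\le d(\theta_{n,m},\Theta)\to_P 0$ by the previous paragraph. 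For the denominator, Assumption~\ref{A3} sets $\vartheta_{n,m}'=\theta_{n,m}$ and requires $d(\vartheta,\vartheta_{n,m})\to_P 0$; combining this with $d(\vartheta',\theta_{n,m})=d(\theta_{n,m},\vartheta')\le d(\theta_{n,m},\Theta)+d(\theta,\vartheta')$ — here one uses that $\vartheta'$ for the metric depth can be taken in $\Theta$ or at least that $d(\theta_{n,m},\vartheta')\to_P d(\vartheta,\vartheta')$ follows from $d(\theta_{n,m},\Theta)\to_P 0$ — one obtains $d(\vartheta_{n,m},\vartheta_{n,m}')\to_P d(\vartheta,\vartheta')$. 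Since $d(\vartheta,\vartheta')>0$ is bounded away from zero by the definition of the metric depth, the ratio converges in probability by the continuous mapping theorem applied to $(a,b)\mapsto a/b$ on $[0,\infty)\times(0,\infty)$.

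I would then assemble the pieces: by Slutsky-type arguments for convergence in probability, the quantity inside the bracket for $D(x,P)$ converges in probability to the corresponding quantity for $D_{\mbox{m}}(x,P)$, and applying the $1$-Lipschitz map $t\mapsto(1+t)^{-1}$ gives $|D(x,P)-D_{\mbox{m}}(x,P)|\to_P 0$. One technical point to be careful about is the event on which $\Theta_{n,m}$ is empty: following the convention announced before Theorem~\ref{td}, one works on the complement of this event, whose probability one would argue tends to zero (or is excluded by the standing assumption that such a sequence exists), so it does not affect the in-probability statement.

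The main obstacle I anticipate is pinning down exactly how $\vartheta$ and $\vartheta'$ for the metric depth relate to $\vartheta_{n,m},\vartheta_{n,m}'$ under Assumption~\ref{A3} — in particular verifying that $d(\vartheta_{n,m},\vartheta_{n,m}')$ converges to the correct positive limit $d(\vartheta,\vartheta')$ and stays bounded away from $0$ along the sequence with probability tending to $1$, since otherwise the ratio could blow up. Once that denominator control is secured, the rest is routine continuity-of-arithmetic-operations and Lipschitz-bound bookkeeping; the substantive content is entirely carried by Theorem~\ref{td}.
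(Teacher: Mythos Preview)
Your proposal follows essentially the same route as the paper: pick $\theta_{n,m}\in\arg\min_{z\in\Theta_{n,m}}d(z,\Theta)$, use Theorem~\ref{td} for the first conclusion, then control numerator and denominator of the ratio separately via the triangle inequality and Assumption~\ref{A3} (the paper sets $\vartheta'=\theta\in\arg\min_{z\in\Theta}d(x,z)$ explicitly and bounds the difference of ratios by a direct case split on the sign of $D(x,P)-D_{\mbox{m}}(x,P)$ rather than your Lipschitz/continuous-mapping packaging, but the substance is identical). One small correction: the inequality $|d(x,\theta_{n,m})-d(x,\Theta)|\le d(\theta_{n,m},\Theta)$ only holds in one direction in general; you need, as the paper does, to fix the specific $\theta\in\Theta$ achieving $d(x,\Theta)$ and bound $d(\theta,\theta_{n,m})$ by $d_H(\Theta,\Theta_{n,m})$ to get both directions.
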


 \begin{proof}
 For each $(n,m)\in\mathbb{N}\times\mathbb{N},$ let us fix $\theta_{n,m}\in\arg\inf_{z\in\Theta_{n,m}}d(\Theta,z)$ and $\vartheta'=\theta\in\arg\inf_{z\in\Theta}d(x,z).$
%
We have that $0\leq d(\Theta,\theta_{n,m})\leq d_H(\Theta,\Theta_{n,m})$ and $0\leq d(\theta,\theta_{n,m})\leq d_H(\Theta,\Theta_{n,m}),$ by the definitions of $\theta,$ $\theta_{n,m}$ and $d_H(\cdot,\cdot).$ Then, as $d_H(\Theta,\Theta_{n,m})\rightarrow_P 0$ by Theorem \ref{td}, $d(\Theta,\theta_{n,m})\rightarrow_P 0$ and  \begin{eqnarray}\label{dt}d(\theta,\theta_{n,m})\rightarrow_P 0.\end{eqnarray}

By the triangular inequality, $d(\theta_{n,m},\vartheta_{n,m})\leq d(\theta_{n,m},\theta)+d(\theta,\vartheta)+d(\vartheta, \vartheta_{n,m})$ and $d(\theta,\vartheta)\leq d(\theta,\theta_{n,m})+d(\theta_{n,m},\vartheta_{n,m})+d(\vartheta_{n,m}, \vartheta).$ Thus, $|d(\theta_{n,m},\vartheta_{n,m})-
d(\theta,\vartheta)|\leq d(\theta,\theta_{n,m})+d(\vartheta_{n,m}, \vartheta).$ 
Using then  (\ref{dt}) and Assumption \ref{A3}, 
we get  
\begin{eqnarray}\label{v}|d(\theta_{n,m},\vartheta_{n,m})-d(\theta,\vartheta)| \rightarrow_P 0.\end{eqnarray}  
Let us denote $R:=\left|
\displaystyle\frac{d(x,\theta)}{d(\vartheta,\theta)}
-\displaystyle\frac{d(x,\theta_{n,m})}{d(\vartheta_{n,m},\theta_{n,m})}
\right|\cdot d(\theta_{n,m},\vartheta_{n,m})\cdot d(\theta,\vartheta).$ 
Using that $d(x,\theta)\leq d(x,\theta_{n,m})+d(\theta_{n,m},\theta)$ and $d(x,\theta_{n,m})\leq d(x,\theta)+d(\theta,\theta_{n,m}),$ we get that
$$R\leq\left\{\begin{array}{ll}-d(x,\theta)[d(\theta_{n,m},\vartheta_{n,m})-d(\theta,\vartheta)]+d(\theta,\vartheta)d(\theta_{n,m},\theta)
 & \mbox{if }D(x,P)\leq D_{\mbox{m}}(x,P)
\\
d(x,\theta_{n,m})[d(\theta_{n,m},\vartheta_{n,m})-d(\theta,\vartheta)]+d(\theta_{n,m},\vartheta_{n,m})d(\theta_{n,m},\theta) & \mbox{otherwise.}
\end{array}\right.$$
By (\ref{dt}) and (\ref{v}), we obtain then that $|D(x,P)-D_{\mbox{m}}(x,P)|\rightarrow_P 0.$
 \end{proof}

Next theorem 
states that the random depth is a  statistical functional depth, satisfying all six properties that constitutes the notion.
\begin{theorem}\label{Td} Let $(\mathfrak{F},d)$ be a functional metric space  and $P$ a randomly symmetric probability distribution on $\mathfrak{F}.$ 
Under the Assumptions  \ref{condV}  and  \ref{condD},
$D$ 
is a statistical functional depth. 
\end{theorem}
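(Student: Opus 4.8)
The plan is to verify the six properties P-1 through P-6 directly from the closed form $D(x,P)=[1+I_m(x,P)]^{-1}$ with $I_m(x,P)=d(x,\theta_{n,m})/d(\vartheta_{n,m},\vartheta_{n,m}')$, exploiting that $D$ is a strictly decreasing function of the ``index'' $I_m$, which is itself essentially the distance to a point $\theta_{n,m}$ rescaled by a quantity not depending on $x$. Since most of these properties are already known for the metric depth $D_{\mathrm{m}}$ (by \citet{Nieto21a}), the key observation is that the random depth has exactly the same functional form with $\Theta$ replaced by the (random but, conditionally on the draws, fixed) singleton choice $\theta_{n,m}$; so each property reduces to a statement about the metric $d(\cdot,\theta_{n,m})$ together with the stability of the normalizing denominator.

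First I would dispose of P-1: since $d(f(x),f(y))=a_f d(x,y)$ with $a_f>0$, both numerator $d(x,\theta_{n,m})$ and denominator $d(\vartheta_{n,m},\vartheta_{n,m}')$ scale by $a_f$ under $f$ (using that centers and the auxiliary functions transform equivariantly), so the ratio $I_m$ is invariant; this is where the ratio form, rather than $D_r(x,P)=[1+d(x,\theta_{n,m})]^{-1}$, is essential, exactly as flagged in the text. Next, P-2 and P-3 are geometric: $D(\theta_{n,m},P)=1$ is the global supremum since $I_m\ge 0$, giving maximality at the center; and P-3 follows because the stated chain of inequalities on $d(y,z),d(y,x),d(x,z)$ forces $d(x,\theta_{n,m})<d(y,\theta_{n,m})$ (here $z$ plays the role of the deepest point, hence $d(z,\theta_{n,m})=0$), whence strict monotonicity of $t\mapsto[1+t]^{-1}$ gives the conclusion --- essentially the betweenness argument from \citet{Nieto21a}. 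P-4, upper semi-continuity in $x$, is immediate (indeed $D$ is continuous in $x$) because $x\mapsto d(x,\theta_{n,m})$ is $1$-Lipschitz and the denominator is a positive constant in $x$.

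The substantive steps are P-5 and P-6. For P-5 I would invoke Assumption~\ref{condD}: the map $f$ there is of the form $f(y(v))=\alpha(v)y(v)$ with $\alpha(v)\in(0,1)$ on a set $L_\eta$ of positive Lebesgue measure and $\alpha\equiv 1$ elsewhere, so Assumption~\ref{condD} (applied with $\mathcal B=L_\eta$, noting $x(v)\ne\theta_{n,m}(v)$ a.e.\ on $L_\eta$ for $x$ not already deepest, which is where the hypothesis $D(x,P)<\sup D$ enters) yields $d(f(x),f(\theta_{n,m}))<d(x,\theta_{n,m})$, i.e.\ the numerator strictly decreases; combined with the behaviour of the denominator (which one must check does not decrease faster --- handled via Assumption~\ref{A3}-type control or because $\vartheta,\vartheta'$ are chosen among centers so $f$ acts on them compatibly), $I_m$ strictly decreases and $D$ strictly increases, giving P-5. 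For P-6, continuity in $P$, I would use Theorem~\ref{td} (so $d_H(\Theta_{n,m},\Theta)\to_P 0$, and by Corollary~\ref{prima} the center is stable across draws), Assumption~\ref{condV} (stability of $d(\vartheta_{n,m},\vartheta_{n,m}')$ under weak perturbations of $P$), and the continuity of the metric, to conclude $|D(x,Q)-D(x,P)|<\epsilon$ whenever $d_{\mathcal P}(Q,P)<\eta$; the ratio is a continuous function of the two ingredients as long as the denominator stays bounded away from zero, which is guaranteed by $d(\vartheta_{n,m},\vartheta_{n,m}')>0$ together with Assumption~\ref{condV}.

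I expect the main obstacle to be P-5: one must carefully track that shrinking the curves on $L_\eta$ decreases the numerator $d(x,\theta_{n,m})$ \emph{strictly} while the denominator $d(\vartheta_{n,m},\vartheta_{n,m}')$ does not change in a way that cancels or reverses the gain --- this requires pinning down how $f$ acts on the chosen $\vartheta_{n,m},\vartheta_{n,m}'$ (e.g.\ via Assumption~\ref{A3}, under which $\vartheta_{n,m}'=\theta_{n,m}$ so the denominator is $d(\vartheta_{n,m},\theta_{n,m})$ and transforms controllably), and verifying the measure-theoretic hypothesis $x(v)\ne\theta_{n,m}(v)$ on $\mathcal B$ needed to apply Assumption~\ref{condD}. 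A secondary subtlety, noted by the authors, is the abuse of notation when $\Theta_{n,m}$ is empty, but under the theorem's standing hypothesis that $P$ is randomly symmetric this does not arise. Once these points are settled, the remaining verifications are routine consequences of $t\mapsto(1+t)^{-1}$ being smooth and strictly decreasing on $[0,\infty)$.
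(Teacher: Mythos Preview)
Your treatment of P-1 through P-5 is essentially what the paper does: it simply cites \cite{Nieto21a}, observing that $D$ has the same structural form as the metric depth with $\Theta$ replaced by the singleton $\theta_{n,m}$, so those arguments carry over verbatim. Your more explicit walk-through is fine and raises no new difficulties.

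The gap is in P-6. The results you invoke --- Theorem~\ref{td} and Corollary~\ref{prima} --- are statements about convergence of $\Theta_{n,m}(P)$ to $\Theta(P)$ as $n,m\to\infty$, or about stability across independent empirical draws from the \emph{same} $P$. They say nothing about how $\theta_{n,m}(Q)$ relates to $\theta_{n,m}(P)$ at \emph{fixed} $n,m$ when $d_{\mathcal P}(Q,P)$ is small, which is precisely what P-6 asks. Assumption~\ref{condV} controls the denominator, but you have not controlled the numerator's dependence on $P$ through the center, and ``continuity of the metric'' alone does not supply it.

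The paper's route for P-6 is different and provides the missing idea: it shows that if $\rho(Q,P)\to 0$ (Prohorov metric) then $\rho(Q_n,P_n)\to 0$ as well, by chaining $P_n\to P\to Q\to Q_n$ via the triangle inequality for $\rho$ together with the $\epsilon$-enlargement characterisation $(A^{\epsilon/3})^{\epsilon/3}\subset A^{2\epsilon/3}$. Applying this with $A_x=\mathfrak{H}^{\theta_{n,m}}_x$ yields $Q_n(\mathfrak{H}^{\theta_{n,m}}_x)\geq 1/2$ for all $x\in S_m$, so the random center $\theta_{n,m}$ chosen for $P$ is \emph{also} a random center for $Q$. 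With the same $\theta_{n,m}$ used in both $D(x,P)$ and $D(x,Q)$, the numerator $d(x,\theta_{n,m})$ is literally unchanged, and the remaining difference is handled by the ideas of the proof of Theorem~\ref{DMD} (and Assumption~\ref{condV}). This Prohorov-metric transfer of the center from $P$ to $Q$ is the step your sketch lacks.
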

The proof of the properties P-1. to P-5.~follows from \cite{Nieto21a}, for $D$ and $D_r.$ 
The proof of property P-6. is in what follows. 
\begin{proof}
Given  $\{Q_n\}_n$ and $\{P_n\}_n,$ sequences of empirical distributions of $Q$ and $P$ respectively,
let us first prove that \begin{eqnarray}\label{P}\rho(Q,P)\rightarrow 0\end{eqnarray} implies $\rho(Q_n,P_n)\rightarrow 0,$ where $\rho(\cdot,\cdot)$ is the Prohorov metric.
 For that we make use of $Q_n\rightarrow Q$ and $P_n\rightarrow P,$ which by \citet[Theorem 11.3.3.]{rDudleyRAP} is equivalent to \begin{eqnarray}\label{nn}\rho(Q_n,Q)\rightarrow 0 \mbox{ and } \rho(P_n,P)\rightarrow 0.\end{eqnarray} Additionally, as $\rho(\cdot,\cdot)$  is a distance, we have that $\rho(Q_n,Q)=\rho(Q,Q_n).$
 Given $\epsilon>0$ and $A$ a Borel set, denoting $A^{\epsilon}:=\{y\in S:d(x,y)<\epsilon \mbox{ for some } x\in A\},$ we have that $(A^{\epsilon/3})^{\epsilon/3}= A^{2\epsilon/3}$ due to the following. Let $x\in A,$ $y\in A^{\epsilon/3}$ and $z\in (A^{\epsilon/3})^{\epsilon/3},$ then $d(x,y)<\epsilon/3$ and $d(y,z)<\epsilon/3.$ Therefore, by the triangular inequality, $d(x,z)\leq d(x,y)+d(y,z)<2\epsilon/3.$ Thus, for any $\epsilon>0$ and any $A$ Borel set, by the definition of  the Prohorov metric \cite[page 394]{rDudleyRAP}, (\ref{P}) and (\ref{nn}), $$P_n(A)\leq P(A ^{\epsilon/3})+\epsilon/3\leq Q(A^{2\epsilon/3})+2\epsilon/3\leq Q_n(A^\epsilon)+\epsilon$$ which implies $\rho(Q_n,P_n)\rightarrow 0.$
 
Taking $A_x=\mathfrak{H}^{\theta_{n,m}}_{x},$ with $\theta_{n,m}$ a random center of functional  symmetry of $P,$ we have that $P_n(A_x)\geq 1/2$ for all  $x\in S_m.$
 Then, as $\rho(Q_n,P_n)\rightarrow 0,$ we obtain $Q_n(A_x^{\epsilon})\geq 1/2-\epsilon.$ Thus, when $\epsilon\rightarrow 0,$ $Q_n(\mathfrak{H}^{\theta_{n,m}}_{x})\geq 1/2$ for any  $x\in S_m.$ 
 From here, following the ideas of the proof of above Theorem \ref{DMD} we obtain that $\lim_{Q\rightarrow P}|D(x,P)-D(x,Q)|=0.$
 \end{proof}

\begin{corollary}Given $(\mathfrak{F},d)$  a functional metric space satisfying Assumption \ref{condD} and $P$ a randomly symmetric probability distribution on $\mathfrak{F}$,
 $D_r$ 
satisfies properties P-2.~, P-3.~, P-4.~, P-5.~and P-6.~and does not satisfy property P-1.
\end{corollary}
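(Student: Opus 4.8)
The plan is to exploit that $D_r$ is the special case of $D$ obtained by replacing the scale factor $d(\vartheta_{n,m},\vartheta_{n,m}')$ by $1$: writing $g(t)=(1+t)^{-1}$, which is continuous, strictly decreasing and $1$-Lipschitz on $[0,\infty)$, we have $D_r(x,P)=g\bigl(d(x,\theta_{n,m})\bigr)$ and $D(x,P)=g\bigl(d(x,\theta_{n,m})/d(\vartheta_{n,m},\vartheta_{n,m}')\bigr)$, with $d(\vartheta_{n,m},\vartheta_{n,m}')>0$ independent of $x$. Hence, for a fixed distribution, $D_r(\cdot,P)$ and $D(\cdot,P)$ induce the same order on $\mathfrak{F}$ and are both maximised at $\theta_{n,m}$, so the reasoning used for $D$ in \cite{Nieto21a} carries over verbatim for P-2.~, P-3.~and P-4.: maximality at the centre holds because $D_r(\theta_{n,m},P)=1=\sup_{y\in\mathfrak{F}}D_r(y,P)$; the strict chain $D_r(x,P)<D_r(y,P)<D_r(z,P)$ in P-3.~reduces, for $z=\theta_{n,m}$, to $0<d(y,\theta_{n,m})<d(x,\theta_{n,m})$, which is exactly what the betweenness hypothesis provides; and P-4.~holds since $D_r(\cdot,P)$ is continuous in $x$.

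For P-5.~I would reproduce the argument used for the metric depth: with $f$ the pointwise shrinkage on $L_\eta$ of the statement, the random centre of symmetry of $P_{f(X)}$ can be coupled with $f$ applied to that of $P_X$, as in \cite{Nieto21a}, and Assumption \ref{condD} with $\mathcal{B}=L_\eta$ gives $d\bigl(f(x),f(\theta_{n,m})\bigr)<d(x,\theta_{n,m})$, whence, $g$ being strictly decreasing, $D_r(x,P_X)<D_r(f(x),P_{f(X)})$; unlike for $D$, no normalisation has to be tracked, since P-5.~asks only for a strict inequality. For P-6.~I would repeat the proof of Theorem \ref{Td} essentially unchanged: it shows that $\theta_{n,m}$ remains a random centre of symmetry of every $Q$ close to $P$ and, along the lines of Theorem \ref{DMD}, that $d\bigl(\theta_{n,m}(Q),\theta_{n,m}(P)\bigr)\to_P 0$; then, using that $g$ is $1$-Lipschitz together with the reverse triangle inequality, $\bigl|D_r(x,Q)-D_r(x,P)\bigr|\le d\bigl(\theta_{n,m}(Q),\theta_{n,m}(P)\bigr)\to_P 0$, which is in fact a shorter path than for $D$, where the denominator must also be controlled.

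It remains to exhibit one transformation violating P-1. Take any $f$ satisfying the hypothesis of P-1.~with ratio $a_f\ne 1$ (dilations $y\mapsto a_f\,y$ under an $L^q$-type metric are such) and any $x\in\mathfrak{F}$ with $d(x,\theta_{n,m})>0$. Since $\mathfrak{H}^z_y$ is defined only through comparisons of distances, all multiplied by $a_f$ under $f$, the random centre transports equivariantly, so $\theta_{n,m}(P_{f(X)})$ is coupled to $f(\theta_{n,m}(P_X))$ and $d\bigl(f(x),f(\theta_{n,m})\bigr)=a_f\,d(x,\theta_{n,m})$; therefore
\[
D_r\bigl(f(x),P_{f(X)}\bigr)=\bigl[1+a_f\,d(x,\theta_{n,m})\bigr]^{-1}\ne\bigl[1+d(x,\theta_{n,m})\bigr]^{-1}=D_r(x,P_X),
\]
so P-1.~fails. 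The normalisation in $D$ is precisely what repairs this, the factor $a_f$ then appearing in both numerator and denominator and cancelling.

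The only step I expect to need care is the equivariance of the random centre of symmetry under $f$ at the level of a coupling of the empirical subsequences $\{P_n\}$ and $\{P_m\}$ with their pushforwards; for dilations this is immediate from the distance-comparison form of $\mathfrak{H}^z_y$, and for the shrinkage maps of P-5.~it is inherited from the corresponding step in \cite{Nieto21a}.
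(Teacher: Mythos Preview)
Your proposal is correct and matches the paper's approach: the paper gives no separate proof for this corollary, stating just before Theorem~\ref{Td} that P-1.\ through P-5.\ for both $D$ and $D_r$ follow from \cite{Nieto21a}, and placing the corollary immediately after the P-6.\ proof for $D$ so that the same argument carries over. Your treatment is in fact more explicit than the paper's---notably the dilation counterexample for the failure of P-1.\ and the Lipschitz shortcut for P-6.---but the underlying route is the same.
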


The fulfillment and necessity of these results is studied in practice in the simulation Section \ref{Simu}.
Analogous results to those in this section are obtained for the notion of $\delta-$random symmetry.

\subsection{Proposed functional depth in the framework of  simultaneously continuous dimensions}\label{FMD}

The aim of this section is to provide an appropriate pseudo-distance for the random depth and the metric depth to be applied to a set of PET data in order to find the representative subject of the set. In this type of set, each datum (subject) consists on a functional datum in simultaneously four different continuous dimensions. 

Settings beyond  the one-dimensional functional case for which a notion of depth has been already proposed are the multivariate functional one and that of  data defined on a subset of $\mathbb{R}^2$ \citep{Genton2014}. In \cite{HubertC} and \cite{MultFuBD} the integrated data depth (Section \ref{sectionOdFD}) is extended to the space of multivariate ($\RR^k$) continuous functions on an interval $I\subset\mathbb{R},$ by substituting $D_1(x(t),P_t)$ in (\ref{FD}) by a multivariate depth: the Tukey depth, the random Tukey depth, when $k$ is large, or the simplicial depth.
 Note that the setting in \cite{HubertC} and \cite{MultFuBD} considers a space with two variables one continuous (the time) and one atomic with $k$ elements in the support. This is far from our actual setting as we consider four continuous variables, three spatial, $X, Y$ and $Z,$ and the time. Nevertheless, the integrated data depth is easily extended to our continuous setting by simply substituting  $I\subset\mathbb{R}$ in (\ref{FD}) by  $I\subset\mathbb{R}^k,$ with $k=4$ in our case. In fact, a particular case of this,  for $k=2,$ is the proposal in \citet{Genton2014}. 
 However, there is still a remaining issue due to the integrated data depth involving the computation of $D_{1}(x(t),P_t)$ at each grid point $t, $ which requires the values of the different PET scans to  be comparable at  each fixed grid point. This is not the case for this type of data (see Section \ref{ARD}), unless they have been preprocessed to lie on a common template, which can introduce other artifacts into the analysis.

 Having said that, in Definition \ref{HD} we propose a pseudo-distance for the random depth and the metric depth to be applied to data which are theoretically measured in different continuous directions and which allows different grids for each subject, in practice.

\begin{definition}\label{HD}
Let $\mathfrak{F}$ be a functional space over  $I\subset\RR^k$ with $\lambda_k(I)\in(0,\infty)$ ($\lambda_k$ the Lebesgue measure in $\RR^k$) and such that  $x(t)\in\RR$  for every $t\in I$ and any $x\in\mathfrak{F}.$ For any $x,y\in\mathfrak{F}$   integrable on $I,$ the \emph{hyperbolic function} is defined as
$d(x,y):=|i(x)-i(y)|,$  where
$i:  \mathfrak{F}  \longrightarrow   E$ with  $$i(x):=
 \displaystyle\frac{\displaystyle\int_{I}x(t)dt}{\displaystyle\int_{I}I_x(t)dt}, \mbox{ } \mbox{ if } \lambda_k(x\neq 0)>0$$ and $i(x):=0$ otherwise.
There, $E\subset \RR$ is the codomain of $i$ and  $I_x(t)$ takes value 1 if $x(t)\neq 0$ and 0 otherwise.
\end{definition}

While this definition is useful for our particular application below, it can easily be generalised to any setting. It will take positive values for those elements in the grid where the distance is relevant and zero for cases where it is irrelevant.

\begin{proposition}\label{dist}
 The hyperbolic function is a pseudo-distance.
\end{proposition}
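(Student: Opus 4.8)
The plan is to verify the three defining properties of a pseudo-distance for $d(x,y) = |i(x)-i(y)|$ on $\mathfrak{F}$: non-negativity, symmetry, and the triangle inequality (the key difference from a genuine metric being that $d(x,y)=0$ need not imply $x=y$, which is exactly what a pseudo-distance allows). Since $i$ maps $\mathfrak{F}$ into $E \subset \RR$ and $d$ is the absolute value of the difference of the images, all three properties are essentially inherited from the standard absolute-value metric on $\RR$, so the argument is short.

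First I would record that $i$ is well-defined as a function $\mathfrak{F} \to E \subseteq \RR$: for $x$ integrable on $I$ with $\lambda_k(\{x\neq 0\})>0$, the numerator $\int_I x(t)\,dt$ is a finite real number and the denominator $\int_I I_x(t)\,dt = \lambda_k(\{x\neq 0\}) \in (0,\infty)$, so the quotient is a well-defined real; and $i(x)=0$ in the remaining case. Hence $i(x)\in\RR$ for every admissible $x$. Then for any $x,y \in \mathfrak{F}$ both integrable on $I$: non-negativity is immediate since $d(x,y)=|i(x)-i(y)| \geq 0$; symmetry is immediate since $|i(x)-i(y)| = |i(y)-i(x)|$; and for the triangle inequality, given a third element $z$, I would apply the triangle inequality for the absolute value on $\RR$ to the real numbers $i(x),i(y),i(z)$:
\[
d(x,z) = |i(x)-i(z)| \leq |i(x)-i(y)| + |i(y)-i(z)| = d(x,y) + d(y,z).
\]
Finally I would remark that $d(x,x) = |i(x)-i(x)| = 0$, so $d$ is reflexive, but that $d(x,y)=0$ only forces $i(x)=i(y)$ (equal integral averages over the support), not $x=y$ — confirming that $d$ is a pseudo-distance rather than a distance, consistent with the remark in the text that the definitions are stated for metric spaces but extend to pseudo-metric spaces.

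There is essentially no main obstacle here; the only point requiring a moment's care is the well-definedness of $i$ — namely that the denominator is strictly positive and finite exactly when the case distinction in Definition \ref{HD} puts us in the quotient branch, and that integrability of $x$ on $I$ guarantees the numerator is finite. Once $i$ is confirmed to be a genuine $\RR$-valued map, the pseudo-metric axioms transfer verbatim from $(\RR,|\cdot|)$ via the pullback $d = |\cdot| \circ (i \times i)$, and nothing further is needed.
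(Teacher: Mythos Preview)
Your proposal is correct and takes essentially the same approach as the paper: the paper's proof is the one-line observation that $d(x,y)$ coincides with the Euclidean distance between the real numbers $i(x)$ and $i(y)$, from which the pseudo-distance axioms follow; you simply spell out this pullback argument in detail, including the well-definedness check for $i$.
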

\begin{proof}
$d(x,y)$ coincides with the, Euclidean, distance between the real numbers $i_x$ and $i_y,$ from which the result is derived.
\end{proof}

\section{Comparison with existing one-dimensional depths for big functional data}\label{sectionOdFD}
 As our focus is on
the
type of data involved in neuroimaging deconvolution, we restrict ourselves to one-dimensional functional depths that are computationally effective when handling big data and that provide deepest value to a generalised median. Among the existing notions, the ones that fall in this category are the family of notions known as integrated data depth \citep{FraimanMuniz}, which includes the modified band depth in the particular case of $J=2$ \citep{Romo09, Genton}. Also in this category are the band depth with $J=2$ \citep{Romo09, Genton} and the random Tukey depth \citep{randomTukey}. This section will discuss these examples of depth but with special attention to the integrated data depth, as in practice, this depth seems to be the most competitive amongst its peers. Besides these definitions of depth, a variety of definitions of one-dimensional functional depth have been provided in the literature. For instance, the spatial depth \citep{ChakrabortyAoS2014}; see \citet{NRBattey2015} for a review.


Given a function $x=x(t)$ in a one-dimensional functional space, usually the space of continuous functions on a compact interval $I\subset\mathbb{R},$ and a distribution $P$ on the space,  the \emph{integrated data depth} of $x$ with respect to $P$ is
\begin{eqnarray}\label{FD} ID(x,P):=\int_{I\subset\mathbb{R}}D_1(x(t),P_t)dt, \end{eqnarray}
 where $P_t$ denotes the marginal of $P$ at time $t.$  
There are two natural ways of computing a depth in $\RR:$  the Tukey depth \citep{Tukey} \begin{eqnarray}\label{TD} D_{1}(x(t),P_t)=\min(P_t(-\infty, x(t)],P_t[x(t),\infty))\end{eqnarray}  and the simplicial depth \citep{Liu90} \begin{eqnarray}\label{SD}D_{1}(x(t),P_t)=P_t(-\infty, x(t)]\cdot P_t[x(t),\infty).\end{eqnarray}
We denote $ID$ in (\ref{FD}) by $D_I$ when using (\ref{TD}) and by $D_M$  when using (\ref{SD}), as the empirical integrated data depth  and modified band depth with $J=2$ coincide when $D_1$ is taken as the simplicial depth on $\mathbb{R}$ \citep{HubertC}.
In \cite{FraimanMuniz} it is also proposed to use $D_{1}(x(t),P_t)=1-|.5-P_t(\infty,x(t)]|,$ but we do not use it here because of its performance when applied to discrete distributions. For instance, given the distribution that applies equal probability to the points $x_1=-1,$  $x_2=-.5,$  $x_3=0,$  $x_4=.5$ and  $x_5=1,$ it is clear that $x_3=0$ is the only deepest element  among the $x_i$ for $i=1, \ldots, 5,$ but this depth associates deepest value also to $x_2.$

The band depth, $D_B,$ was introduced in \cite{Romo09} as a one-dimensional functional depth for samples where the curves rarely cross over, and it was generalized to the modified band depth for when those crossovers exist. Due to the nature of the one-dimensional curves associated to the deconvolution of PET data (see Figure \ref{Oc}) it is not recommended to use the band depth function, because of the crossovers. The random Tukey depth, $D_{T},$ involves the projection of the data in a set of curves drawn with a particular distribution. The use of the random Tukey depth as a multivariate depth has been well studied, however, as a functional depth \citep{LibroMieres} it is not completely clear how to select the mentioned  distribution. Thus, we do not recommend basing methodology for deconvolution of dynamic neuroimaging data on random Tukey depth functions. Nonetheless, we will examine the practicality of all these depths in the empirical applications.  With respect to their theoretical properties, according to  \cite{NRBattey2015},  $D_M, D_B$ and $D_{T}$ satisfy P-1., P-2G.~and P-4.~and do not satisfy P-3.~and P-5.
P-6.~is satisfied  by $D_{T}$ when the limiting distribution is continuous or the sequence of distributions is the sequence of empirical distributions; by $D_B$ when $\mathfrak{F}$ is restricted to be the space of equicontinuous functions on $\mathcal{I}\subset \RR$; and by $D_M$ in general.
%
The result for $D_I$ is below. Its proof follows easily from existing results in the literature
\begin{theorem}\label{TI}
Let $(\mathfrak{F},d)$ be the space of continuous functions on $I$ with the distance induced by the supremum norm.
$D_I$ satisfies P-1., P-2G.~and P-4.~and does not satisfy P-3.~and P-5.
P-6.~is satisfied  by $D_I$  when the limiting distribution is continuous or the sequence of distributions is the sequence of empirical distributions.
\end{theorem}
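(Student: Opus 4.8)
The plan is to establish each of the five claimed properties (P-1, P-2G, P-4, the failure of P-3, the failure of P-5) and the conditional P-6 by reducing them to known statements about the integrated data depth with the Tukey depth in $\mathbb{R}$. First I would recall that $D_1(x(t),P_t)=\min(P_t(-\infty,x(t)],P_t[x(t),\infty))$ is the univariate Tukey depth, which is affine invariant on $\mathbb{R}$, takes maximal value at the median of $P_t$, and is upper semi-continuous in $x(t)$. Property P-1 (distance invariance) follows because, on the space of continuous functions with the supremum metric, the only maps $f$ with $d(f(x),f(y))=a_f d(x,y)$ for all $x,y$ are, up to the relevant structure, pointwise affine transforms $f(x)(t)=a x(t)+b$; applying such a map changes $P_t$ to $P_{aX(t)+b}$, under which the univariate Tukey depth is invariant, and since the integral over $I$ of an invariant integrand is invariant, $D_I(f(x),P_{f(X)})=D_I(x,P_X)$. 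This is exactly the argument used for $D_M,D_B,D_T$ in \cite{NRBattey2015}, so I would cite that and adapt it.

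Next, for P-2G I would take $P$ a zero-mean stationary a.s.-continuous Gaussian process; then each marginal $P_t$ is Gaussian with mean $0$, so the univariate Tukey depth $D_1(x(t),P_t)$ is maximized exactly at $x(t)=0$ and strictly less elsewhere, hence $D_I(\theta,P)=\int_I \tfrac12\,dt=\tfrac12\,\lambda(I)=\sup_x D_I(x,P)$, and this supremum strictly exceeds the infimum since any non-degenerate marginal forces $D_1(x(t),P_t)<\tfrac12$ off $0$. For P-4, upper semi-continuity in $x$: the integrand $t\mapsto D_1(x(t),P_t)$ is bounded by $\tfrac12$, so dominated convergence applies, and upper semi-continuity of the univariate Tukey depth together with Fatou's lemma (applied to $\tfrac12 - D_1$) gives upper semi-continuity of the integral as a function of $x$ in the supremum topology; again this mirrors the $D_M$ argument in \cite{NRBattey2015}. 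The failures of P-3 and P-5 I would demonstrate by the same finite counterexamples that work for $D_M$: P-3 fails because distinct functions can share the maximal integrated value (e.g.\ two functions both passing through every marginal median on disjoint halves of $I$), and P-5 fails because scaling a function on a low-variability region $L_\eta$ by $\alpha(v)\in(0,1)$ need not increase its integrated Tukey depth — indeed if the function already lies at the marginal median there, shrinking it can only move it off the median, decreasing the depth. I would spell out one explicit discrete example covering both failures.

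Finally, for P-6 (continuity in $P$), I would invoke the standard fact that if $d_{\mathcal{P}}(Q,P)\to 0$ in the topology of weak convergence then the marginals $Q_t\to P_t$ weakly for each $t$; when the limit $P_t$ is continuous (or $Q=P_n$ empirical, so that $P_t$ is the true continuous marginal up to the usual a.s.\ Glivenko--Cantelli control), the univariate Tukey depth $D_1(x(t),Q_t)\to D_1(x(t),P_t)$ at every continuity point, and dominated convergence over $I$ (bound $\tfrac12$) upgrades this to $|D_I(x,Q)-D_I(x,P)|\to 0$ $P$-a.s. The continuity hypothesis is precisely what prevents jumps in the Tukey depth at atoms of $P_t$, which is why the statement is conditional, exactly as for $D_T$. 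The main obstacle — though a mild one — is handling the failure of P-6 without the continuity assumption and making the $P$-almost-sure qualifications precise in the empirical case; I would deal with this by restricting attention to continuity points of $P_t$, which carry full $P_t$-mass, and noting that the exceptional null set is uniform enough (via a countable dense argument in $t$ and monotonicity of the distribution functions) to interchange the "almost surely'' with the integral over $I$. Everything else is routine and parallels the treatment of the other integrated-type depths in \cite{NRBattey2015, HubertC}.
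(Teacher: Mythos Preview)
Your overall strategy---reduce each property to the corresponding fact about the univariate Tukey depth and integrate---is exactly the paper's approach, and your treatments of P-2G, P-4 and P-6 are essentially equivalent to the paper's (the paper does P-4 by a direct $\epsilon$--$\delta$ argument with $\epsilon_0=\epsilon/\lambda(I)$ rather than Fatou, and for P-6 it simply cites the argument for the random Tukey depth in \cite{Nieto21a}, but these are cosmetic differences).

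There is, however, a genuine confusion in your P-5 sketch. You write that ``if the function already lies at the marginal median there, shrinking it can only move it off the median, decreasing the depth,'' but this forgets that in P-5 the map $f$ is applied simultaneously to $x$ \emph{and} to the distribution $P_X$: one compares $D_I(x,P_X)$ with $D_I(f(x),P_{f(X)})$. Since $f$ multiplies pointwise by a positive $\alpha(v)$, it preserves the order of values at each $t$, so the univariate Tukey depth $D_1(f(x)(t),P_{f(X),t})=D_1(x(t),P_{X,t})$ for every $t$, and the integral is unchanged. The failure of P-5 is therefore by \emph{equality}, not by decrease; the paper makes this explicit with the three-point distribution $P\{x_i\}=1/3$, $x_1(t)>0$, $x_2(t)\equiv 0$, $x_3(t)<0$. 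Your stated intuition would not produce a valid counterexample, though the discrete examples you allude to from the $D_M$ literature will. A smaller point: your P-1 claim that every similarity of $(C(I),\|\cdot\|_\infty)$ is pointwise affine $x\mapsto ax+b$ is false (Banach--Stone gives $f(x)(t)=\epsilon(t)\,x(\phi(t))$ with $|\epsilon|\equiv 1$ and $\phi$ a homeomorphism of $I$); the paper's proof of P-1 is itself terse on this, but you should not build your argument on an incorrect characterization.
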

\begin{proof}

P-1. It derives from the invariance of the Tukey depth as $D_1(f(x),P_{f(X)})=D_1(x,P_X).$ 

P-2G. As $P$ is a zero-mean, stationary, almost surely continuous Gaussian process on $I,$ $P_t$  is  a zero-mean Gaussian distribution in $\RR$ for any $t\in I.$ Then, $D_1(0,P_t)=\sup_{x\in\RR}D_1(x,P_t)$ for all $t\in I$ and, consequently, $\sup_{x\in\mathfrak{F}}\int_ID_1(x(t),P_t)dt=D(0,P).$

P-3. The proof is by counterexample. Let $P$ be a discrete distribution with $P\{x_1\}=P\{x_2\}=1/2$ and such that $x_1(t)>x_2(t)$ for all $t\in I.$ Then, there exist infinitely many elements with maximum depth in the space of continuous functions; which contradicts \citet[Lemma 3.5]{NRBattey2015} that states that a depth that satisfies P-3 is uniquely maximized.

P-4.  As the cumulative distribution function,  as a function of $x,$ is upper semi-continuous, for all $\epsilon_0>0$ there exists a $\delta_0>0$ such that $D_1(y,P)\leq D_1(x,P)+\epsilon_0$ for all $x,y\in\RR$ with $|x-y|<\delta_0.$ Thus, given $\epsilon>0$ we take $\epsilon_0=\epsilon/\lambda(I)$ and $\delta=\delta_0$ and therefore for any $t\in I$, \begin{eqnarray}\label{e4} D_1(y(t),P_t)\leq D_1(x(t),P_t)+\epsilon_0 \end{eqnarray} for all $x,y\in\mathfrak{F}$ with $|x(t)-y(t)|<\delta.$   Consequently, taking the integral over $t\in I$ on equation (\ref{e4}), we obtain that
$D(y,P)\leq D(x,P)+\epsilon_0\lambda(I)$ for all $x,y\in\mathfrak{F}$ with $d(x,y)<\delta.$

P-5.
The proof is by counterexample. We follow a counterexample used in the proof of  \citet[Theorem 4.7]{Nieto21a} but state it here for the sake of completeness. Let $P$ be a discrete probability with $P[x_i]=1/3$ for $i=1,2,3$ and $x_1(t)>0,$  $x_2(t)=0$ and  $x_3(t)<0$ for all $t\in I,$ with $x_1$ and $x_3$ non-constant functions. As $\alpha(t)>0$ for all $t\in I,$ $f(x_1)(t)>0,$ $f(x_2)(t)=0$ and $f(x_3)(t)<0$ for all $t\in I.$ As the transformation simply shrinks the convex hull over the $L_{\delta}$ region, whilst the one dimensional Tukey depth at any time point computed  on the transformation is the same as before. It is thus immediate that $D(x_1,P_X)=D(f(x_1),P_{f(X)}).$

P-6.
It follows from the proof of the random Tukey depth in Theorem 4.8 of \cite{Nieto21a} by using that $P$ continuous implies that $P_t$ is continuous for all $t\in I$ and that $\lambda(I)<\infty.$
\end{proof}



\section{Simulations} \label{Simu}
This section illustrates through simulations the theoretical properties of the proposed depth. It suffices for that to analyze one-dimensional functional curves. Equivalent results are obtained, though, when simulating curves that are simultaneously continuous in various dimensions. 

To simulate the type of curves  associated with the deconvolution of PET data (Figure \ref{Oc}), we define a sequence of functional random variables  $\{X_{k,N}\}_{k=1, \ldots,N},$ where for all $t\in (0,1),$
\begin{eqnarray}\label{X}
X_{k,N}(t):= & x_{k,N}(t)+Y_k\cdot Z_k(t),\\
\label{x}
 x_{k,N}(t):= & \displaystyle\exp[-Ntc/(k+N/c)].
\end{eqnarray}
We take  $c$ to be a real positive constant, $Y_1, \ldots, Y_N$ a simple random sampling following a Bernoulli distribution of parameter 1/3 and  for each $t\in(0,1),$ $Z_1(t), \ldots, Z_N(t)$ a simple random sampling following a mean zero normal distribution with standard deviation $sd.$ For practical purposes, we assume that the functions $\{X_{k,N}\}_{k=1, \ldots,N},$ and $\{x_{k,N}\}_{k=1, \ldots,N},$  are recorded at the set of discretization points
\begin{eqnarray}\label{I}I:=\{i\cdot10^{-1}: i=1, \ldots, 99\}\subset(0,1).\end{eqnarray}

The functional random variables  $\{X_{k,N}\}_{k=1, \ldots,N}$ are the result of applying a small perturbation, or error, to the non-random functions  $\{x_{k,N}\}_{k=1, \ldots,N}.$ 
The reason for the $Y_k$'s, $k=1, \ldots, N,$ to follow a Bernoulli distribution with parameter 1/3 is that then, only some elements of the sample are affected by the white noise error, introduced by $Z_1, \ldots, Z_N.$ It generally occurs to real data that the errors does not affect all the elements of the sample, or not all of them in the same measure, and therefore the introduction of the Bernoulli distribution, while obviously not a true representation of the system, produces simulated data closer to real data. Another possibility is to build randomness in $sd,$ i.e. to draw $sd_1, \ldots, sd_N$ from some real distribution and use $sd_k$ as the standard deviation of each $Z_k(t),$ $k=1, \ldots, N.$

\begin{figure}[htb]
\begin{center}
\includegraphics[height=5cm,width=.3\linewidth]{./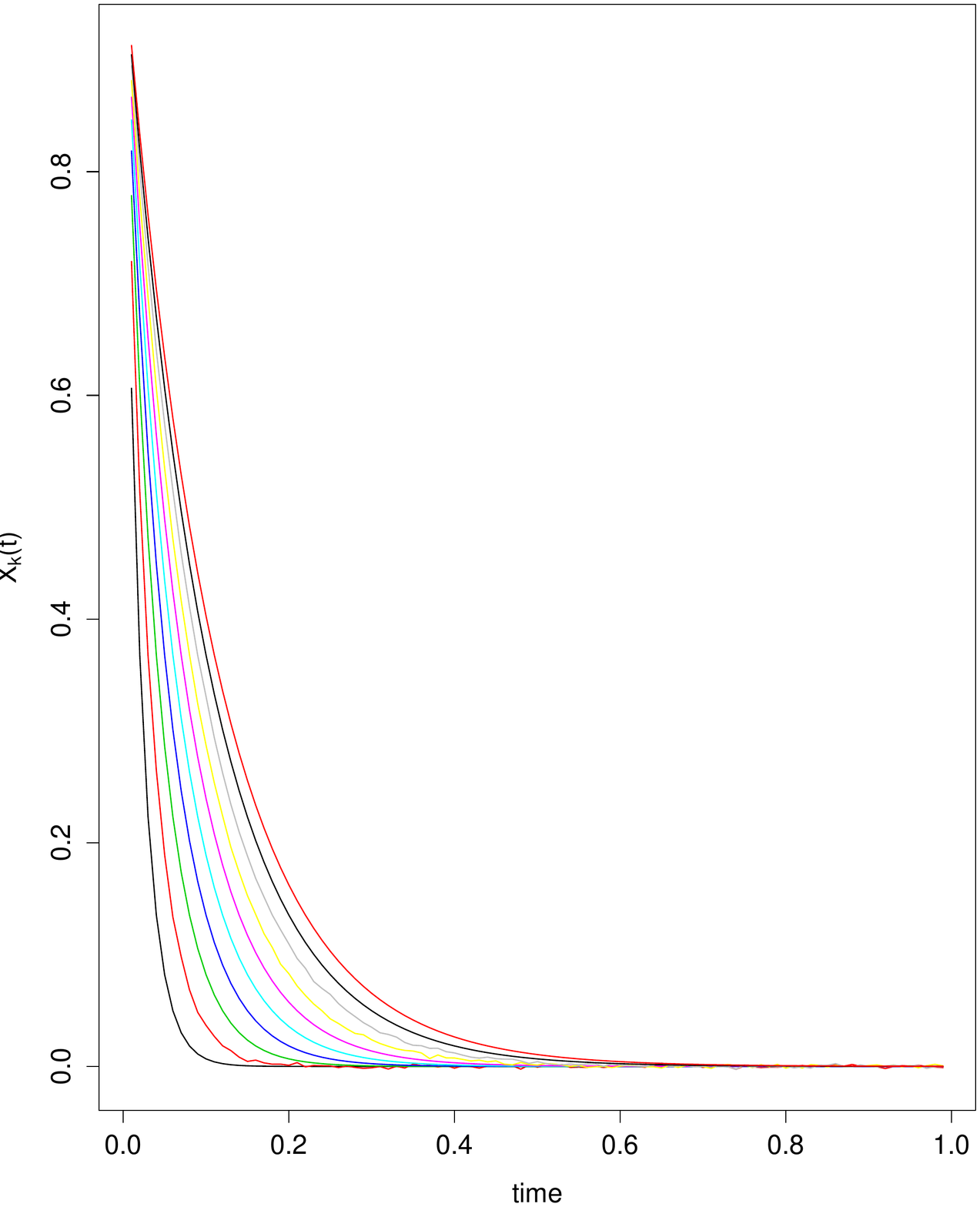}
\includegraphics[height=5cm,width=.3\linewidth]{./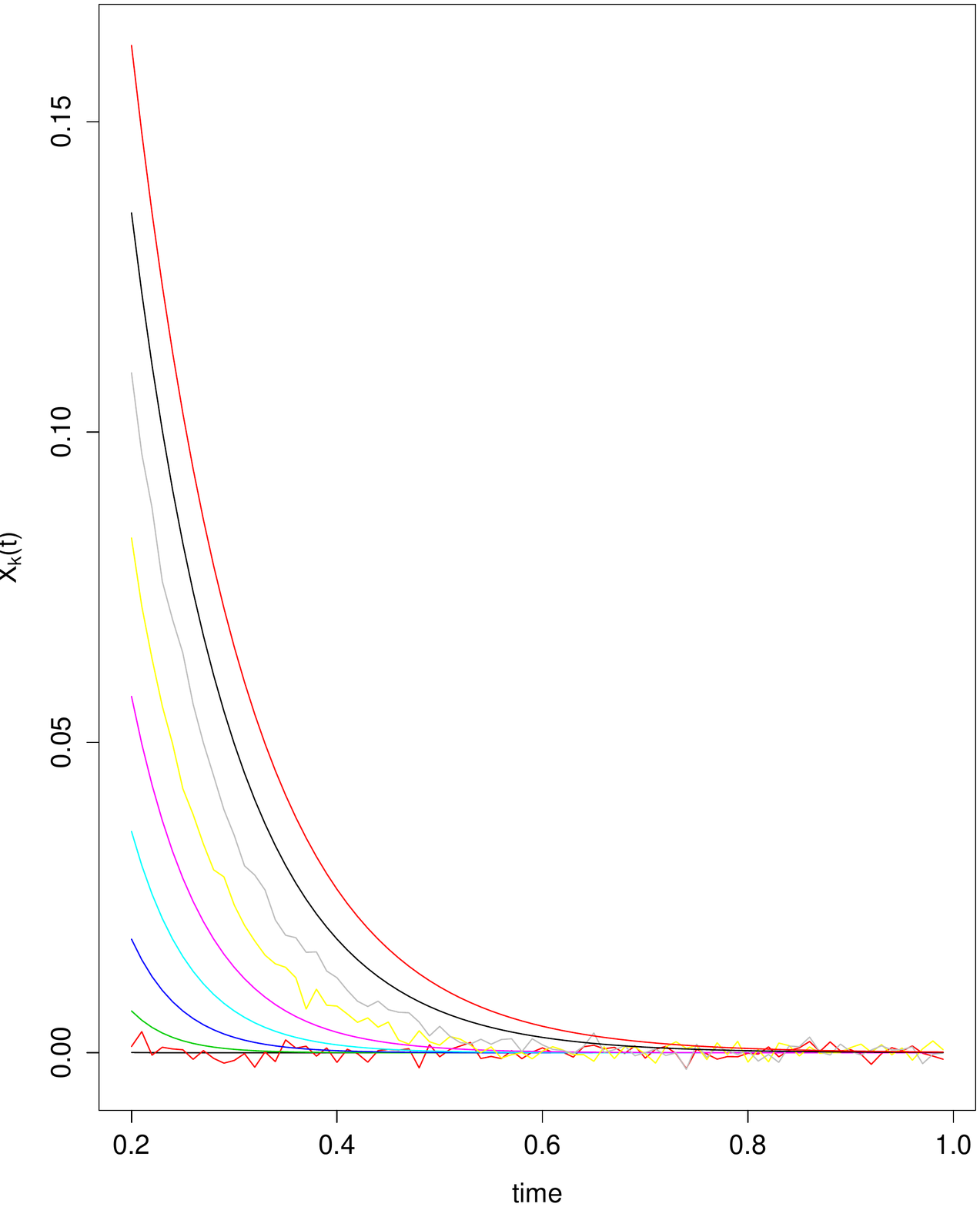}
\includegraphics[height=5cm,width=.3\linewidth]{./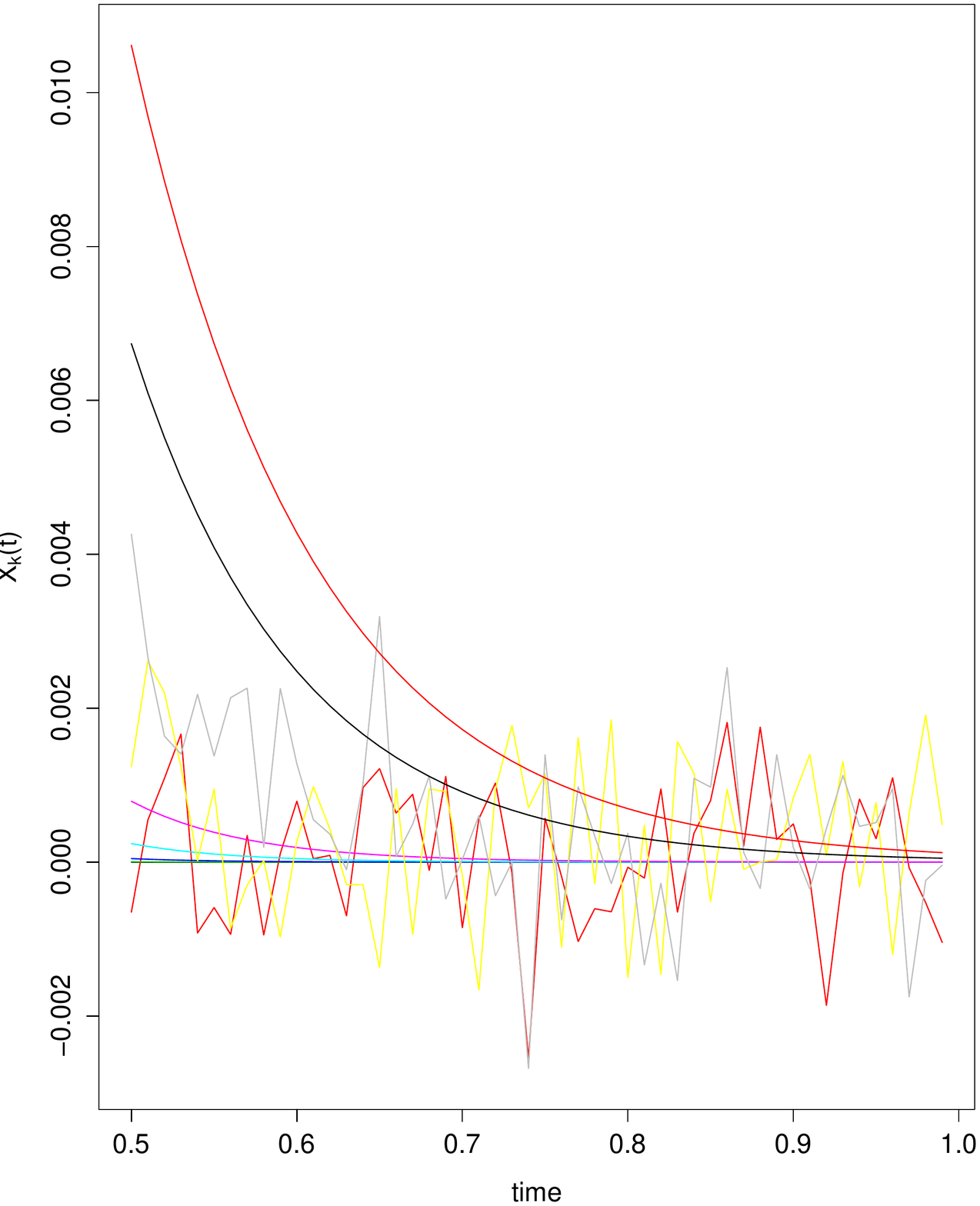}
\\
\includegraphics[height=5cm,width=.3\linewidth]{./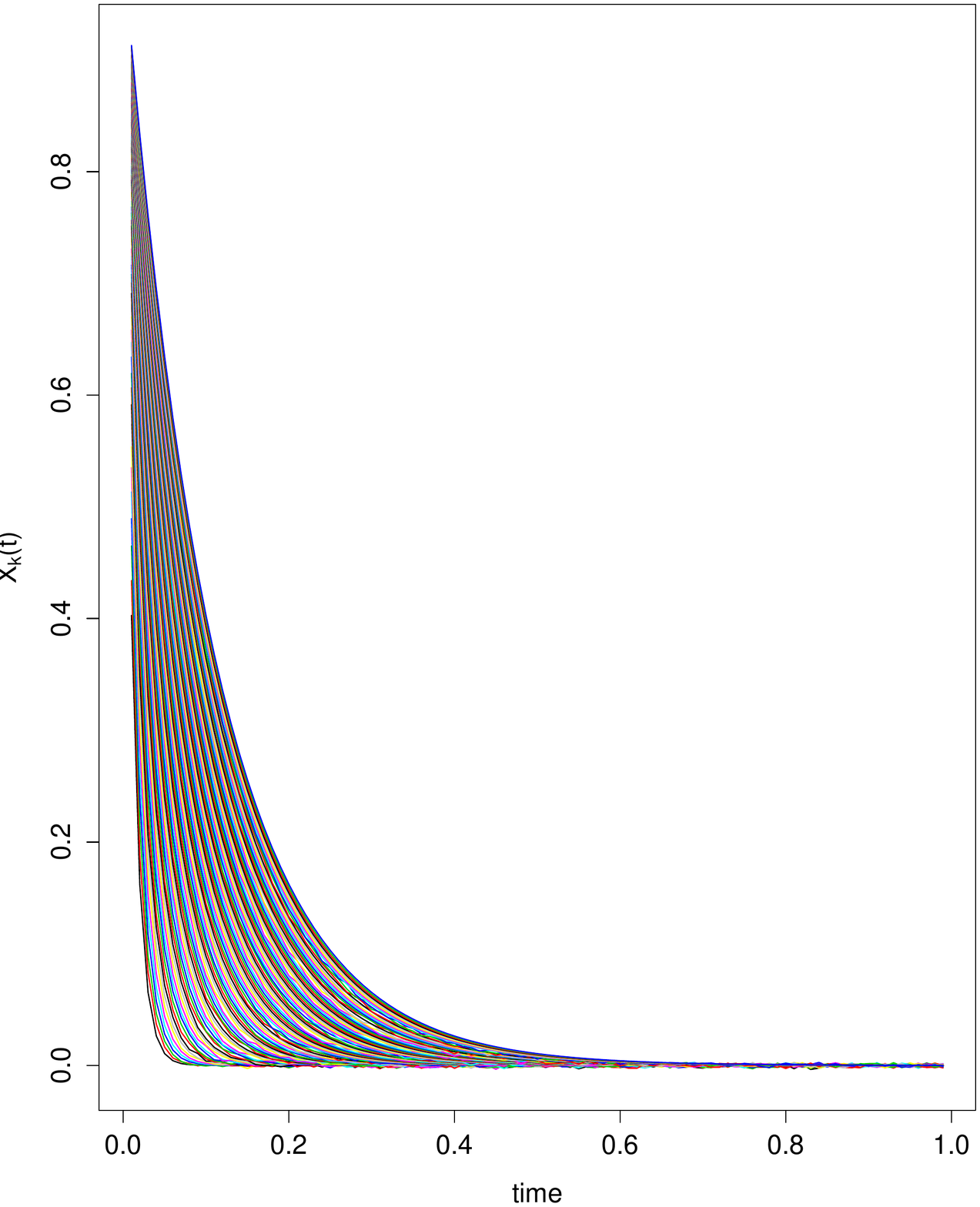}
\includegraphics[height=5cm,width=.3\linewidth]{./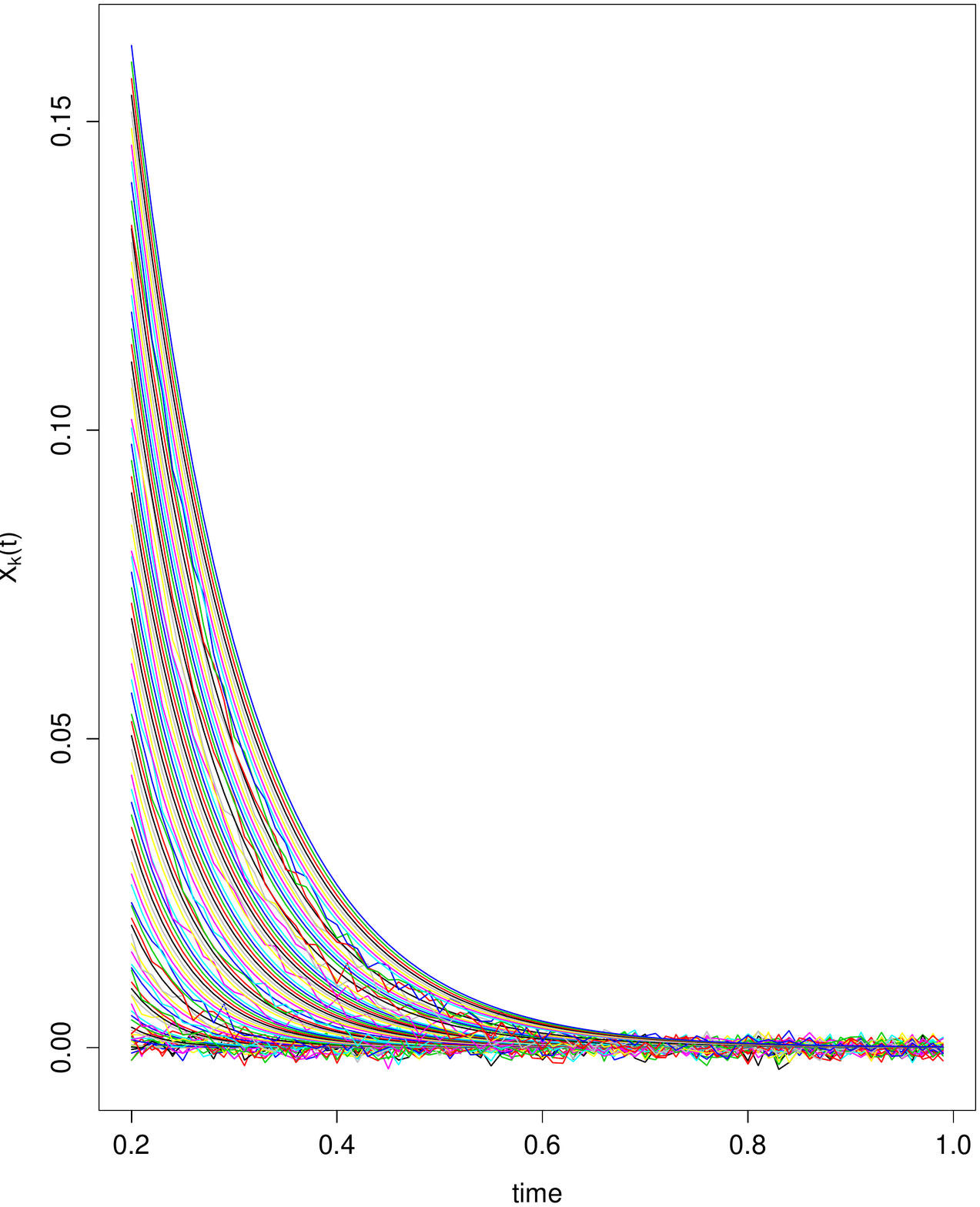}
\includegraphics[height=5cm,width=.3\linewidth]{./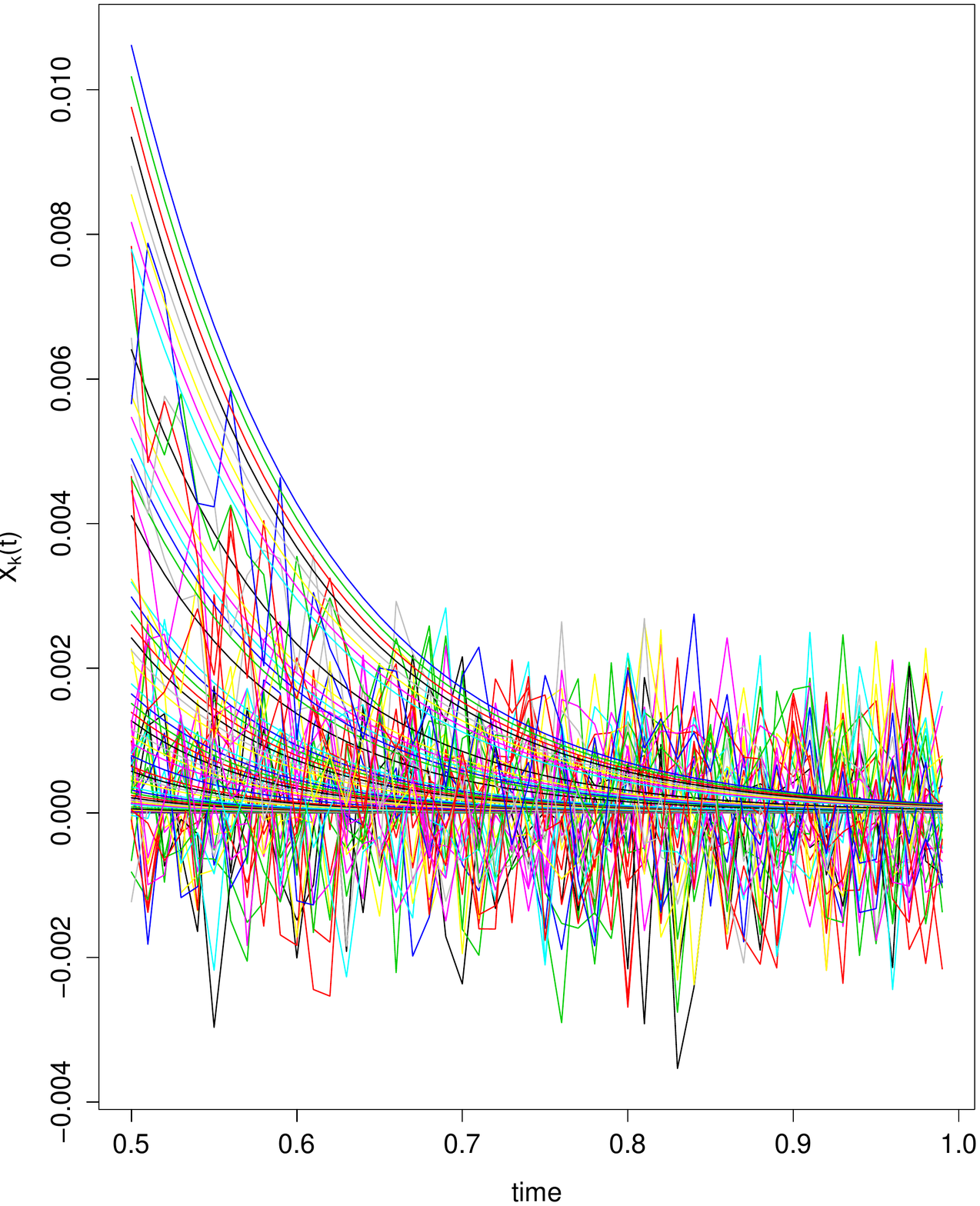}
\end{center}
\caption{Realization of $\{X_{k,N}\}_{k=1, \ldots,N}$ with $sd=.001$  and $c=10$ for $N=10$ (top row) and $N=100$ (bottom row); plotting the image of the curves on the whole domain (left column), on the domain $[.2,1)$ (middle column) and in the domain $[.5,1)$ (right column). }
\label{Si}
\end{figure}

To illustrate these random variables, in the left column of Figure \ref{Si} we have plotted a realization of $\{X_{k,N}\}_{k=1, \ldots,N}$   for $N=10$ (top row) and $N=100$ (bottom row). The panels in the central column of the figure correspond to a zoom on the domain $[.2,1)$ of the plots in the left column and in the right column to a subsequent zoom on the domain $[.5,1).$  $sd=.001$ is taken as the standard deviation and $c=10$ as the real positive constant. That  only some of the curves in the figure are perturbed (approximately 1/3 of them) is due to the use of the Bernoulli distribution of parameter 1/3.

\subsection{Convergence of the random depth}

This section investigates, through simulations, the random depth in a \textit{big} functional data framework. It relates to Theorems \ref{td} and \ref{DMD}. Using the random depth, we compute the  distance between a random deepest element of $\{x_{k,N}\}_{k=1, \ldots,N},$ defined by (\ref{x}), and the set of true deepest elements. In Table \ref{t1}, the mean and standard deviation of these distances are recorded, when a random deepest element is computed 100 times. This is done for $N=250000,$ $c=10^{-3}$ and $n,m\in\{10,100,1000\}.$ The selection of $N$ is based on the population size we encounter in the real data set used in Section \ref{sectionAD}. Here, we use the distance associated to the space of Lebesgue square integrable functions, generally referred to as the $\mathbb{L}_2$ distance.
For the simulations, the $\mathbb{L}_2$ distance is not exact, as we estimate it using the set of discretization points $I,$ defined by (\ref{I}), at which we assume the data are recorded.

\begin{table}[!htb]
\caption{Mean and standard deviation over 100 repetitions of the distance between a  deepest curve, among $\{x_{k,N}\}_{k=1, \ldots,N}$ computed with the random depth with $n,m\in\{10,100,1000\},$ and the set of true deepest curves. $N=250000$ and $c=10^{-3}.$}
\begin{center}
\begin{tabular}{r|ccc}\label{t1}
&$m=10$&$m=100$&$m=1000$
\\
\hline
\multirow{2}{*}{$n=10$}&$2.471075\cdot 10^{-8}$&$1.065400 \cdot  10^{-8}$&$1.623565\cdot 10^{-8}$
\\
&$(7.247131\cdot 10^{-8})$&$( 4.828071 \cdot  10^{-8})$&$(7.211914\cdot 10^{-8})$
\\ \\
\multirow{2}{*}{$n=100$}&$9.332167\cdot 10^{-10}$&$4.457600 \cdot  10^{-10}$&$9.390100 \cdot  10^{-10}$
\\
&$(1.923750\cdot 10^{-9} )$&$( 7.697200 \cdot  10^{-10})$&$(3.558400 \cdot  10^{-9} )$
\\ \\
\multirow{2}{*}{$n=1000$}&$1.013214\cdot 10^{-10}$&$1.052271 \cdot  10^{-10}$&$1.142058\cdot 10^{-10}$
\\
&$(1.011335\cdot 10^{-10})$&$(9.209343 \cdot  10^{-11})$&$( 1.175811\cdot 10^{-10})$
\\
\hline
\end{tabular}
\end{center}
\end{table}

From Table \ref{t1}, it can be seen that for each fixed $m$, the recorded means and standard deviations slightly decreases when $n$ increases. For each fixed $n,$ however, they    are stable and already low for $m=10.$ To study its stability, Table \ref{t2} displays for $m=N,$ and $n\in\{10,100,1000\},$ the mean and standard deviation over 100 distances between the random and true deepest elements. From Table \ref{t2} it is observed that the stability remains; with more stability as $n$ increases. In fact, for $n=1000,$ the obtained mean value is even smaller when $m=1000$ (Table \ref{t1}) than  when $m=N$  (Table \ref{t2}).

\begin{table}[!htb]
\caption{Mean and standard deviation over 100 repetitions of the distance between a  deepest curve, among $\{x_{k,N}\}_{k=1, \ldots,N}$ computed with the random depth with $n\in\{10,100,1000\}$ and $m=N,$ and the true deepest curves. $N=250000$ and $c=10^{-3}.$}
\begin{center}
\begin{tabular}{r|cccl}\label{t2}
&n=10&n=100&n=1000
\\
\hline
mean &$6.143400 \cdot  10^{-9}$&$ 6.505600 \cdot  10^{-10}$&$1.303000\cdot  10^{-10}$
\\
stad. deviation &$2.937200 \cdot  10^{-8}$ &$1.809600 \cdot  10^{-9}$&$1.455767\cdot  10^{-10}$
\\
\hline
\end{tabular}
\end{center}
\end{table}

Furthermore, in order to understand how close to zero are the values displayed in Tables \ref{t1} and \ref{t2}, we have computed the estimated $\mathbb{L}_2$ distance based on $I$ of the two deepest elements among $\{x_{k,N}\}_{k=1, \ldots,N},$ obtaining $3.418716943\cdot 10^{-14}.$ 

\subsection{Importance of the properties of the notion of depth}
This section shows through simulations the importance of a depth satisfying the properties constituting the notion of functional depth. It relates to Theorem \ref{Td}. According to Theorem \ref{TI}, the existing functional depths
$D_I, D_M, D_B$ and $D_{T}$ fail properties P3.~and P-5. As commented in \cite{NRBattey2015}, P-3.~is useful for classification problems, which is not the focus of this paper. However, P-5.~is of great importance here as an aim of this paper is to compute the deepest  curve among a set of real data curves of the type shown in Figure \ref{Oc}. This type of curves has the particular characteristic of containing a region of the domain in which the curves exhibit little variability in comparison with the rest of the domain. A characteristic of real data it that it contains errors, which is made very apparent in the regions with little variability.
Thus, if P-5.~is not satisfied, the depth can be heavily influenced by the errors in the regions with little variability.

The aim of these simulations is to practically determine which existing measures of functional depth are affected by the error in the data when selecting the deepest element. With this aim in mind, we first study the non-random case, which contains no error in the depth.
We concentrate on the deepest element due to the major role it will play in Section \ref{sectionAD} when computing the deconvolution of neuroimaging data. Despite the fact that when we are using the random depth, the deepest element is a random center of symmetry, we refer to it in this section as the deepest element and not the center to be consistent with the nomenclature when using the other existing examples of depth.
It is algebraically obvious that the deepest elements amongst $\{x_{k,N}\}_{k=1, \ldots,N}$ are $x_{N/2,N}$ and $x_{N/2+1,N}$ when $N$ is even and $x_{(N+1)/2,N}$ when $N$ is odd. The simulations in this section run for samples of moderate size on the range from 10 to 1001. There is no need to use large sample sizes as the non-fulfillment of property P-5. is observable independently of the sample size. 
Additionally, here, we use both even and odd sample sizes as there can be major difficulty in depicting the proper deepest elements when $N$ is even; as there are two elements to depict as opposed to the one when there are odd sample sizes. As the sample sizes used are moderate, our proposal just uses as a deepest element the center of $d$-symmetry.

In the non-random case (i.e. when the $sd$ of the noise is zero) we obtain the deepest element, in practice, for all the different examples of depth function except for the random Tukey depth. This can be seen in the top panel of Table \ref{t0} which displays the proportion of times, out of 100, that the different examples of depth depict the deepest element(s) of $\{x_{k,N}\}_{k=1, \ldots,N}.$ The random Tukey depth is computed by randomly projecting the functions $\{x_{k,N}\}_{k=1, \ldots,N}$. Thus, in doing so, we do not necessarily  obtain the right deepest element(s) because it is possible to obtain that $\int_{[0,1]}x_{k,N}(t)\alpha(t)dt>\int_{[0,1]}x_{k+1,N}(t)\alpha(t)dt$
despite $x_{k,N}(t)<x_{k+1,N}(t)$ for all $t\in(0,1),$ as if $\alpha(t)<0,$ we have that $x_{k,N}(t)\alpha(t)>x_{k+1,N}(t)\alpha(t).$ This is of course aggravated when the number of projections used increases from 1 to 10. Thus, the values obtained in the top panel of Table \ref{t0} are worse when the random Tukey depth is based on 10 random projections than when it is based on a single one projection.

\begin{table}[!htb]
\caption{Proportion of times, out of 100, that the different examples of depth depict the appropriate deepest element among $\{x_{k,N}\}_{k=1, \ldots,N}$ when the observed data is drawn from $\{X_{k,N}\}_{k=1, \ldots,N},$ defined by (\ref{X}). We take $c=10$ and $sd=0$ (top panel), $sd = 10^{-4}$ (middle panel) and $sd = 10^{-3}$ (bottom panel).
This is done for a variety of even and odd sample sizes in the range 10 - 1001. The depth used are the integrated data depth ($D_I$), modified band depth ($D_M$), band depth ($D_B$), random Tukey depth with 1 projection  ($D_{1}$) and 10 projections ( $D_{10}$) and the random (metric) depth ($D$).}
\begin{center}\label{t0}
\begin{tabular}{r|cccc||cccc}
\multicolumn{1}{r}{}&\multicolumn{8}{c}{standard deviation $sd = 0$}\\
&\multicolumn{4}{c}{$N$ even}&\multicolumn{4}{c}{$N$ odd}\\
&N=10&N=100&N=500&N=1000&N=11&N=101&N=501&1001\\
\hline
$D_I$ & 1 &1 & 1& 1&1 & 1 & 1&1
\\
$D_M$& 1 &1 & 1& 1&1 & 1 & 1&1
\\
$D_B$&1 &1 & 1& 1&1 & 1 & 1&1
\\
 $D_{1}$& .80 &.67 &.69 &.73 &.78 & .76 &.74 &.76
\\
$D_{10}$& .11 &.08 &.05 &.07 &.12 &.05  &.06 &.04
 \\
D& 1 &1 & 1& 1&1 & 1 & 1&1
\\
\hline
\multicolumn{1}{r}{}&\multicolumn{8}{c}{}\\
\multicolumn{1}{r}{}&\multicolumn{8}{c}{standard deviation $sd = 10^{-4}$}\\
&\multicolumn{4}{c}{$N$ even}&\multicolumn{4}{c}{$N$ odd}
\\
&N=10&N=100&N=500&N=1000&N=11&N=101&N=501&1001\\
\hline
$D_I$ & .05 &.00 &.00 &.04 &.75 &.45  &.20 &.00
\\
$D_M$& .07& .00& .00&.00 &.64 &.09  &.00 &.00
\\
$D_B$&.17 &.00 &.00 &.02 &.23 &.02  &.02 &.00
\\
  $D_{1}$& .80 &.71 & .73&.66 &.78 &.73  &.70 &.59
\\
$D_{10}$& .09 &.04 &.02 &.03 &.18 &.02  &.05 &.00
 \\
$D$&1 &1 &1 &1 &1 & 1 &1 &1
\\
\hline
\multicolumn{1}{r}{}&\multicolumn{8}{c}{}\\
\multicolumn{1}{r}{}&\multicolumn{8}{c}{standard deviation $sd = 10^{-3}$}\\
&\multicolumn{4}{c}{$N$ even}&\multicolumn{4}{c}{$N$ odd}
\\
&N=10&N=100&N=500&N=1000&N=11&N=101&N=501&1001\\
\hline
$D_I$ & .01 &.01 &.00 &.00 &.68 &.41  &.12 &.03
\\
$D_M$&.01 &.00 &.00 &.00 &.55 &.12  &.00 &.00
\\
$D_B$&.19 &.00 &.00 &.00 &.21 &.00  &.02 &.00
\\
$D_{1}$& .78 &.68 &.33 &.20 & .84& .68 &.41 &.16
\\
$D_{10}$& .11 &.04 &.00 &.00 &.15 &.01  &.00 &.01
 \\
$D$&1 &1 &.41 &.54 &1 &1  &.76 &.72
\\
\hline
\end{tabular}
\end{center}
\end{table}

In the middle and bottom panel of Table \ref{t0} we study how well, in practice, the different examples of functional depth satisfy property P-5. Thus, in the last two panels  of Table \ref{t0} it is recorded the proportion of times, out of 100, that the different examples of depth depict the deepest element(s) of $\{x_{k,N}\}_{k=1, \ldots,N}$ when the observed functions are those simulated with $\{X_{k,N}\}_{k=1, \ldots,N}$  for $sd = 10^{-4}$ (middle panel) and $sd = 10^{-3}$ (bottom panel). A $sd=10^{-3}$ is small in a neuroimaging study. There is no need, however, to increase it because already  the success rate of the existing examples of depth decreases in general abruptly, with the exception of the random (metric) depth and the random  Tukey depth with 1 projection.
Indeed, the only success rates larger than $.2$ are: (i) those that have sample size 11 when not being computed with the random Tukey depth using 10 projections and, (ii) those  that have sample size 101 when computed with the integrated data depth ($D_I$). The results of the random Tukey depth with 1 projection are in general equivalent to those computed when $sd=0.$ In fact, in Table \ref{t0} there is only a visible decrease of this depth as the sample size and $sd$ simultaneously increases. The success rate for the random (metric) depth is always larger than those obtained with any of the other examples of depth. In fact, it is always 1 but for $sd = 10^{-3}$ when the sample sizes are larger than 101.

\section{Applications to neuroimaging}\label{ARD}

An observed PET datum consists of measures of radioactivity concentration on a 3-dimensional (3D) spatial grid over a time grid and, in practice, it can be viewed as a 4D array. The observed PET data, using the PET tracer [$^{11}$C]-Diprenorphine, we analyse in this paper consists of 4D arrays of dimension 128, 128, 95 and 32, and originally comes from the study into opioid receptor density by \cite{hammers:2007}.  The data has been subsequently analyzed by \cite{jiang2009smoothing,FPCA}.
Generally, the first three dimensions are respectively referred to as the dimensions of $X, Y$ and $Z$  and the fourth as the time. Taking this into account, a PET datum consists of a set of $128\cdot 128\cdot 95$ curves over time, which reduces to $N$ curves over time after applying a mask (see Section \ref{sectionAD} for information on the mask and $N$).

Deconvolution of PET data is an important step of its analysis, as the original curve is contaminated by the input function, which is subject specific and usually unrelated to the quantities of interest. Previous procedures (see \cite{Goldsmith, FPCA} 
and the references therein) are based on the selection of certain basis functions so are nonparametric but not in the sense of a depth based approach.

In order to provide a truly non-parametric (depth based) deconvolution of a PET datum, we propose to first obtain the depth of the $N$ given curves and select a deepest curve, according to the random depth 
(see Section \ref{sectionFD}), due to $N$ being large in practice.
We then provide an image based measure associated with the deconvolution based on the depth. In the image, we ignore elements outside the brain and set to $0$ voxels corresponding to the curves which although inside the brain have an integral over time (also known as intensity) less than the one of the deepest curve. This is because we are intrinsically interested in areas where there is high uptake of the radiotracer. The rest of the voxels in the image are then set to the corresponding depth value for the curve at that location. 
This, in effect, orders the image locations and sets voxels with similar curves to similar depth values (allowing us to compare across subjects, as it is now the depth values that are being compared). We study the application of the random depth to the deconvolution of a real PET datum in Section \ref{sectionAD} .

Given a set of PET data we also propose to apply the notion of depth to determine a representative of the set and the distance that the rest of the elements have from the chosen element. We obtain the representative brain image by simply selecting the deepest element of the set, according to the metric depth when using the distance of simultaneously continuous dimensions proposed in Section \ref{FMD}. In computing the depth of the elements in the set, we order them and so provide a degree of representability of each subject in the dataset. It suffices to use the metric depth, as opposed to the random depth, as the number of subjects is generally very small, particularly compared to the number of voxels; for example, there are nine subjects in the dataset studied in Section \ref{sectRep}. The difficulty in the selection of the appropriate depth function here resides, though, in each subject being measured on a different grid, which can only be accounted for with either the metric or the random depth.
In the field of functional data analysis, it is almost always the case that different replications are measured on the same grid before applying any particular analysis. As this is far from the case here, the established analyses would pass the data through a preprocessing step such as interpolation, smoothing and/or the use of warping functions. However, our aim is to find the deepest subject without preprocessing procedures, as the representative subject is itself a useful tool in choosing the preprocessing required by the data. 
\subsection{Application to deconvolution of dynamic neuroimaging data}\label{sectionAD}

In this section, we consider the application of functional depth techniques to the deconvolution of PET image data. The PET image comes from the study in \cite{hammers:2007}, concerning the role of opioid (pain) receptors in normals and epilepsy patients. These images have been previously used in \cite{FPCA, jiang2009smoothing} and in this section we compare our proposed methodology with the results obtained in those papers and with the standard approach taken in PET, known as Spectral Analysis \citep{CunnJ:93,Gunn01}. The objective here is to obtain results similar to the ones given by the parametric approach but without the use of  any parametric assumption.

First, we regard a subject whose image is labeled as 2913 (the labelling refers to the study number when the data was collected). Although the PET image of the subject under study is reconstructed as a cuboidal array,   
our interest is only in those voxels that are inside the brain. Thus, to eliminate the background, we mask the data. Here, to produce the mask, we choose a threshold (20,000) that (on the images summed over time) leaves the brain intact yet eliminates most of the background of the image. This is a conservative value with which we choose the voxels of interest plus some of the background noise voxels, and in total, in the image under study, $N=254,807$ brain voxels remain out of $1,556,480$ voxels in the original array. 

We consider the $N$ selected voxels over time as $N$ one-dimensional functional data. In the left plot of Figure \ref{Oc} (Section \ref{sectionIn}) we have  plotted, in a variety of colors,  $100$ out of these $N$ curves selected at random. In the right plot, the same $100$ curves are plotted in grey, while the deepest curves among the $N$ according to the different depth methods commented in previous sections are also shown; red for the integrated data depth ($D_I$), black for the modified band depth ($D_M$), green for the band depth ($D_B$), orange for the random Tukey depth ($D_{T}$) with 10 random projections and blue for the random depth ($D$). Note that  all of them depict some notion of centrality despite selecting different curves. The random depth in this section is based on the distance of Lebesgue integrable functions with $n=m=500.$ Other values of $n$ and $m$ have been tried providing similar results. If the application at hand requires picking a regular curve as deepest element, it suffices to use, for instance, a Sobolev distance.

As commented above, in the deconvolution process, the elements corresponding to the curves whose integral over time is above the selected deepest curve are assigned their random depth value and  the rest of elements of the array are set either to 0 or ignored. To illustrate this,  Figure \ref{f2} shows the deconvolution based on the random depth from three different perspectives; sagittal, coronal and axial views (the co-ordinates number 50 of $Z,$ $Y$ and $X$ respectively).  The zero values are represented in blue and the positive values increase from yellow to red.

\begin{figure}[tbh]
\begin{center}
\hspace{-.7cm}
\includegraphics[width=.35\linewidth]{./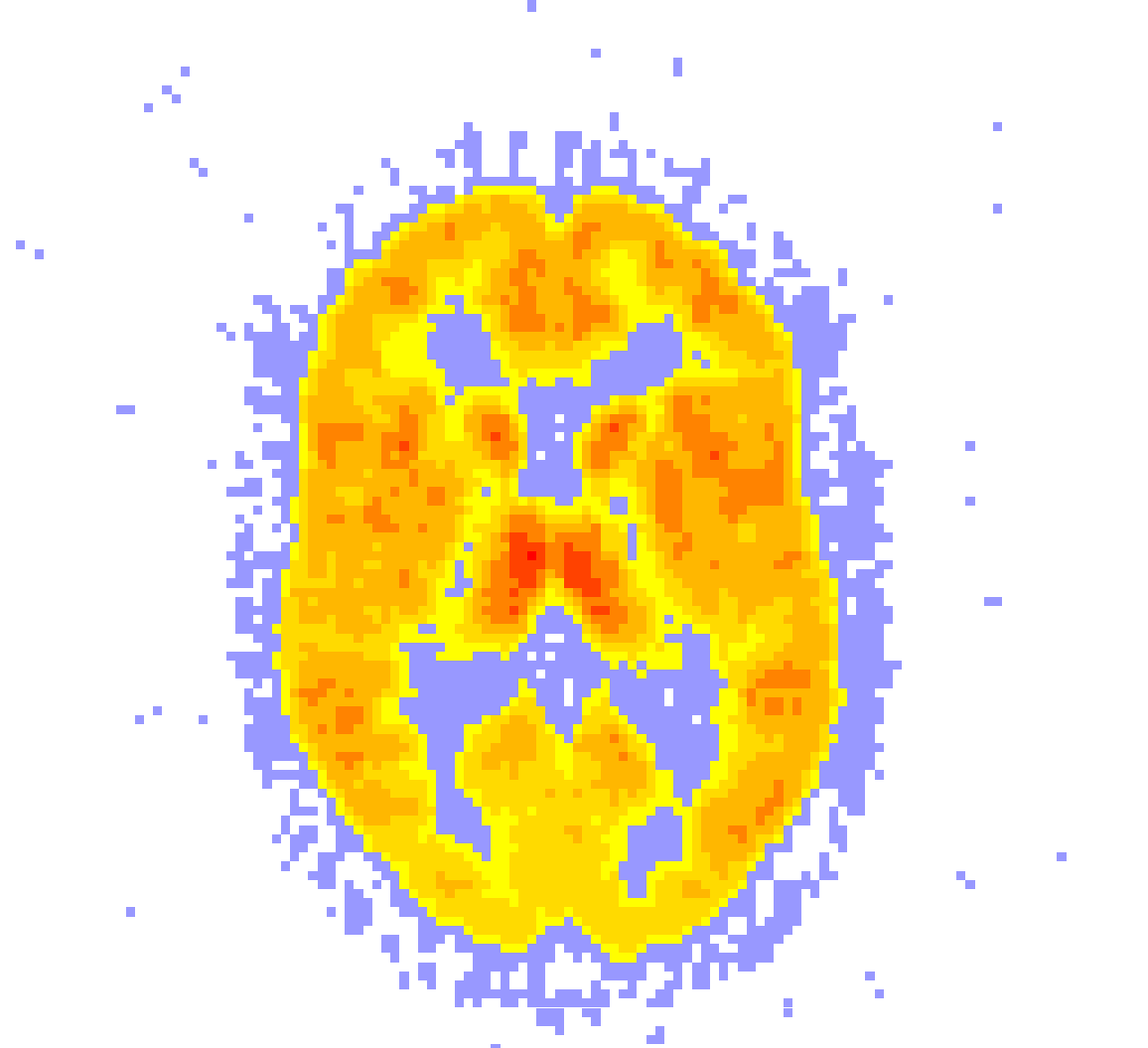}
\hspace{-1cm}
\includegraphics[width=.29\linewidth]{./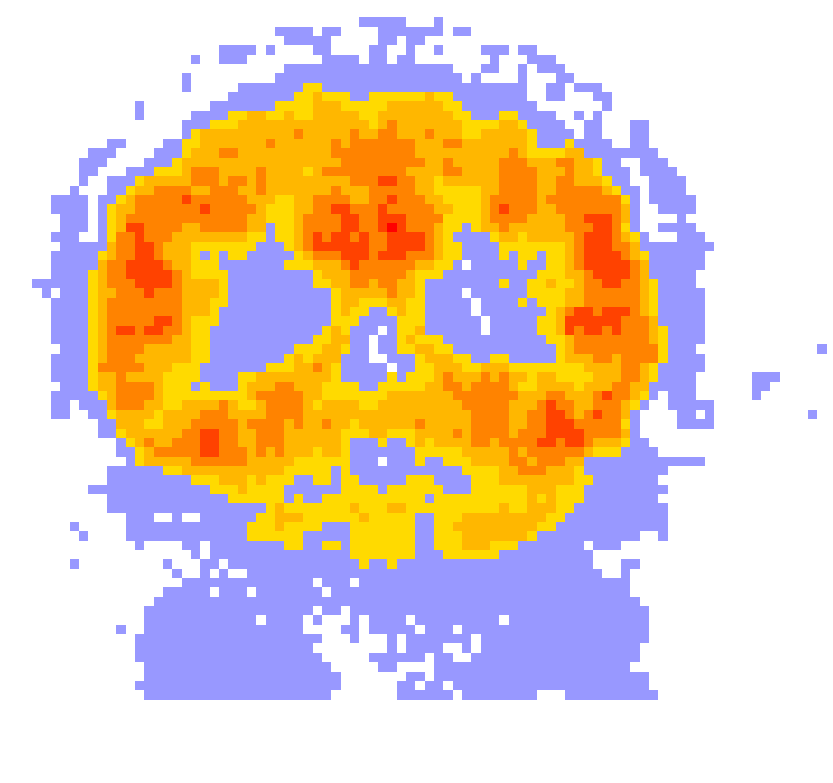}
\hspace{-.5cm}
\includegraphics[width=.348\linewidth]{./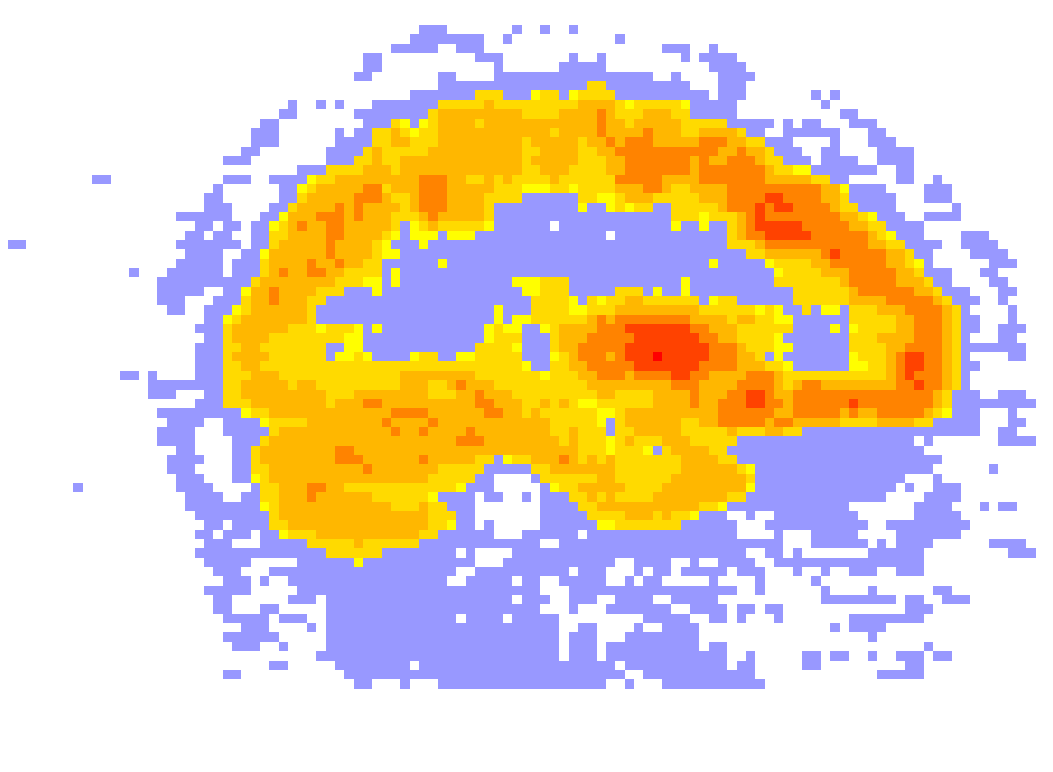}
\vspace{-.5cm}
\includegraphics[width=.1\linewidth,height=4cm]{./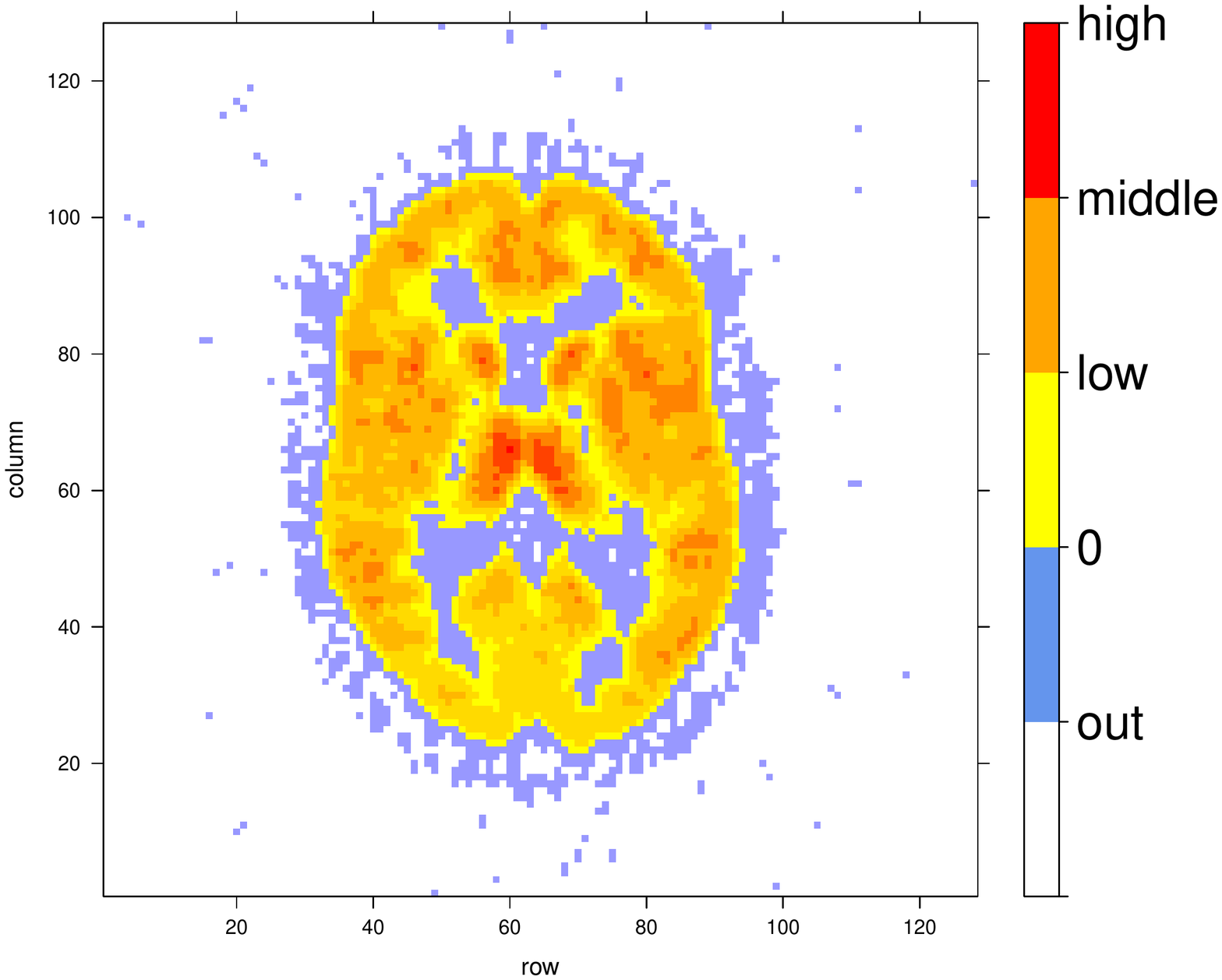}
\end{center}
\caption{$50$th section of the deconvolution using the random depth for three different perspectives; sagittal (left), coronal (middle) and axial (right) views. The zero values are represented in blue and the positive values increase from yellow to red.}
\label{f2}
\end{figure}

In order to study the performance of the proposed deconvolution procedure, we compute how the resulting image correlates with a purely parametric approach, in this case, the linear compartmental modelling approach \citep{CunnJ:93,Gunn01}, which is the default modelling approach taken in PET. In this case, we take the volume of distribution ($V_T$) as determined by PET Spectral Analysis from the assumption of a compartmental model system as the measure of interest.
Given a percentage $p\in(0,100),$ we first consider the voxels that have the $p\%$ highest positive $V_T$ values. Then, we compute the correlation between $V_T$ and our proposed methodology at these voxel locations.
In comparing the use of the different depth functions in our methodology, in Figure \ref{Cor} we plot, for any percentage $p\in(0,100),$ the absolute values of the correlation  when our methodology is based on $D_I$ (red), $D_M$ (red), $D_B$ (green), $D_{T}$ with 1 random projection (cyan) and with 10 random projections (orange) and $D$ (blue).
It is clearly seen in the figure that our proposed notion of depth, $D$ (blue) is more closely correlated with the compartmental model output. In particular, the region of most interest for most PET experiments is those areas with high $V_T$, and thus, it is worth mentioning that for $p\in(0,60)$ 
the correlation that makes use of the random depth takes values in $[.9438 , .9636]$ with mean $.9590$ and standard deviation $.0056$. This is not only very close to one but also very stable. 

This is reassuring in that a completely nonparametric approach is able to reproduce results from a parametric model, which overall is known to be a good fit in many locations. 
However, it has 
been shown in a number of papers that the assumptions used in many parametric modelling approaches for PET data can be difficult to justify \citep{FPCA, OSullivan:09, zanderigo2015model} at all locations, and thus having a purely nonparametric approach can be seen to be both useful when it gives similar results to the parametric models, but also useful to highlight areas of difference. 

It is also observed in the figure that $D_I$ (red) and $D_M$ (black)  have a poorer performance compared to $D$ and, as theoretically suggested in Section \ref{sectionFD}, $D_B$ (green) and $D_{T}$ (cyan) (orange) perform very poorly when applied to this type of data. The performance of $D_{T}$ with 1 random projection (cyan) is a bit better but is generally unstable because of being based on one random projection.

\begin{figure}[htb]
\includegraphics[height=6cm,width=.49\linewidth]{./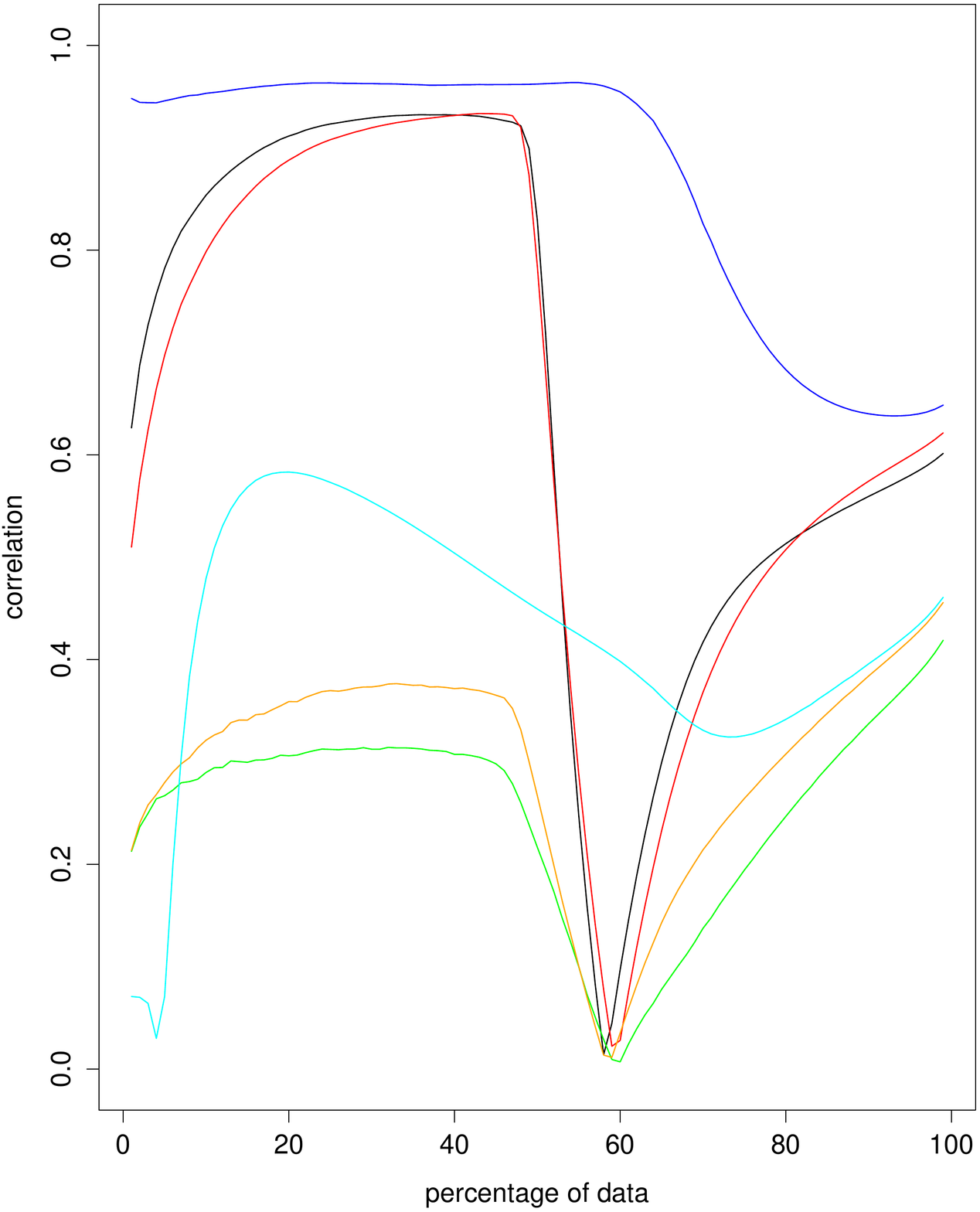}
\includegraphics[height=6cm,width=.49\linewidth]{./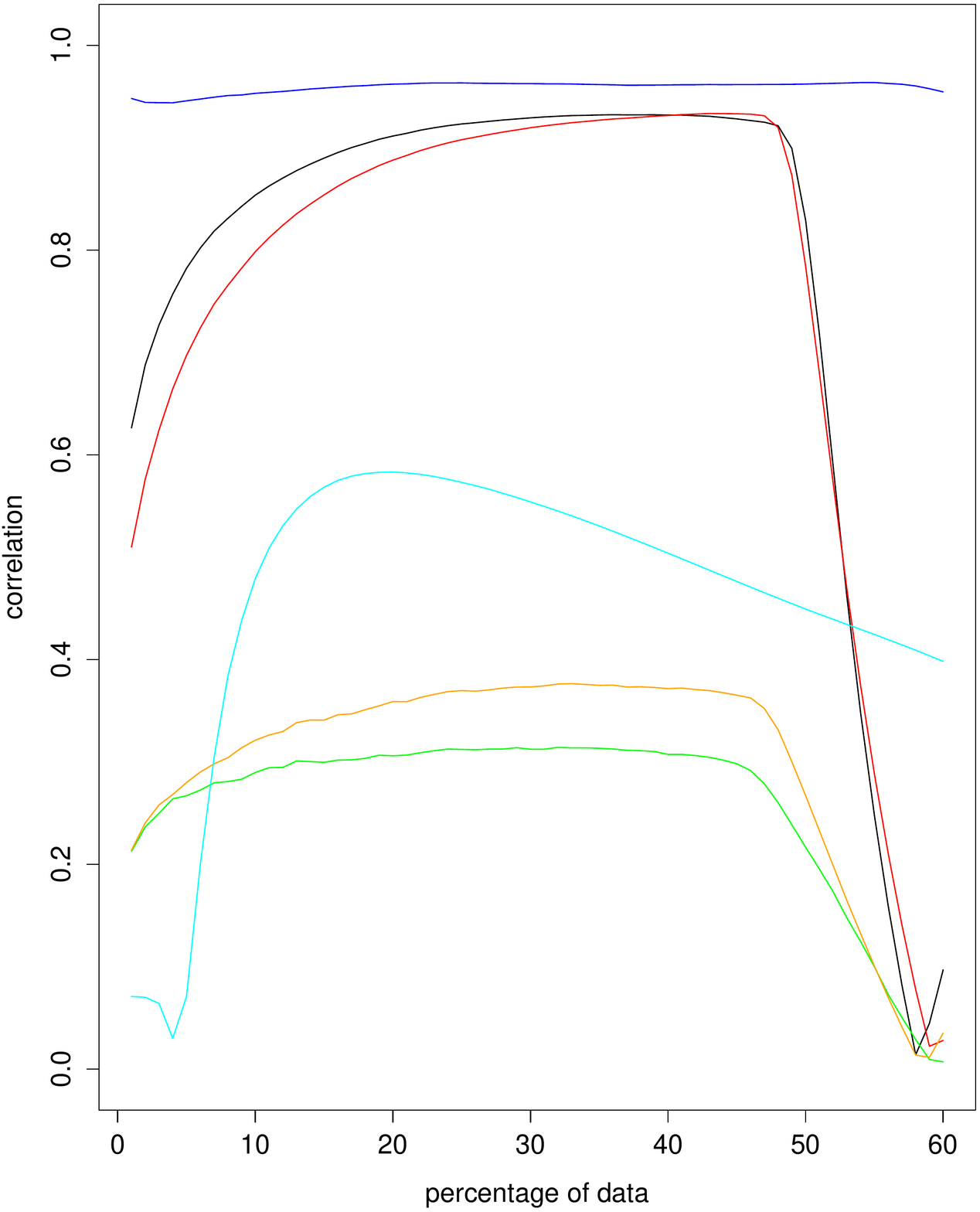}
\caption{Absolute values of the correlations between the linear model methodology and our proposed methodology, using $D_I$ (red), $D_M$ (black), $D_B$ (green), $D_{T}$ with 1 projection (cyan) and with 10 projections (orange) and $D$ (blue). In the left plot for $p\in(0,100)$ and in the right for $p\in(0,60).$ }
\label{Cor}
\end{figure}

\subsubsection{Test-retest reproducibility}

We aim to ensure that our methodology is reproducible. Thus, if a subject is scanned a second time, we need the result of applying our methodology to be highly correlated with the result obtained when applying it to the first scan. Furthermore, this correlation has to be higher than the average of correlating with scans of different subjects (intra subject variability needs to be lower than inter subject variability).
Moreover, we aim to obtain results that are more reproducible than the ones obtained with the existing non-parametric methodologies, such as
those based on functional principal component analysis (FPCA) \citep{FPCA}.

Here we analyse three different PET images, 1774, 1798 and 3715, which respectively are the retest scans of images 1680, 1794 and 3568, which we will respectively label $s_{i,k},$ $i=1,2,3$, $k=1,2$.
The analysis undertaken is explained through $s_{i,1}$ in the following steps. Let us denote by $r_{s_{i,k}}$ the result of applying our deconvolution procedure to $s_{i,k}$.
\begin{enumerate}
\item\label{1} Given a percentage $p\in(0,100),$ select the voxels of $r_{s_{i,1}}$  that constitute the $p\%$ of highest positive values in $r_{s_{i,1}}$.
\item\label{2} Compute the correlation between the values that $r_{s_{i,1}}$ and $r_{s_{i,2}}$ take at the voxels selected in \ref{1}.
\item\label{3} Do again \ref{1} and \ref{2} but interchanging the roles of $r_{s_{i,1}}$ and $r_{s_{i,2}}$. Then, compute the mean of the two correlations and denote it by $c_{s_{i,2}}$.
\item Do \ref{1}, \ref{2} and \ref{3} four more times, each time substituting $r_{s_{i,2}}$ by one of the following: $\{r_{s_{j,k}}: j\neq i, k=1,2\}.$ Define $m_{s_{i,1}}:=\mbox{mean}(\{c_{s_{j,k}}: j\neq i, k=1,2\}).$
\item Define $f_{s_{i,1}}(p):=(c_{s_{i,2}}-m_{s_{i,1}})/c_{s_{i,2}}.$
\end{enumerate}
In Figure \ref{Dif}, $f_{1774}$ (black), $f_{1798}$ (blue) and $f_{3715}$ (green) are plotted with respect to $p\in (0,100).$   The result of applying the same procedure when using the FPCA methodology of \cite{FPCA} is plotted using dotted lines. There are two observations from the figure, which indicates that the test-retest reproducibility is better achieved using the random  depth. The first is that the continuous lines (random depth) are always above the corresponding dotted line (the model based on FPCA).  The second, is that random depth always provides a higher correlation between scans of the same subject than of different subjects while this is not the case for the model based on FPCA with $f_{1774}$ (black). In this case,  it is clearly observed from the figure that on average the correlation between scans of the same subject is smaller than with different subjects, as the back dotted line takes negative values.

\begin{figure}[htb]
\begin{center}
\includegraphics[height=7cm,width=.4\linewidth]{./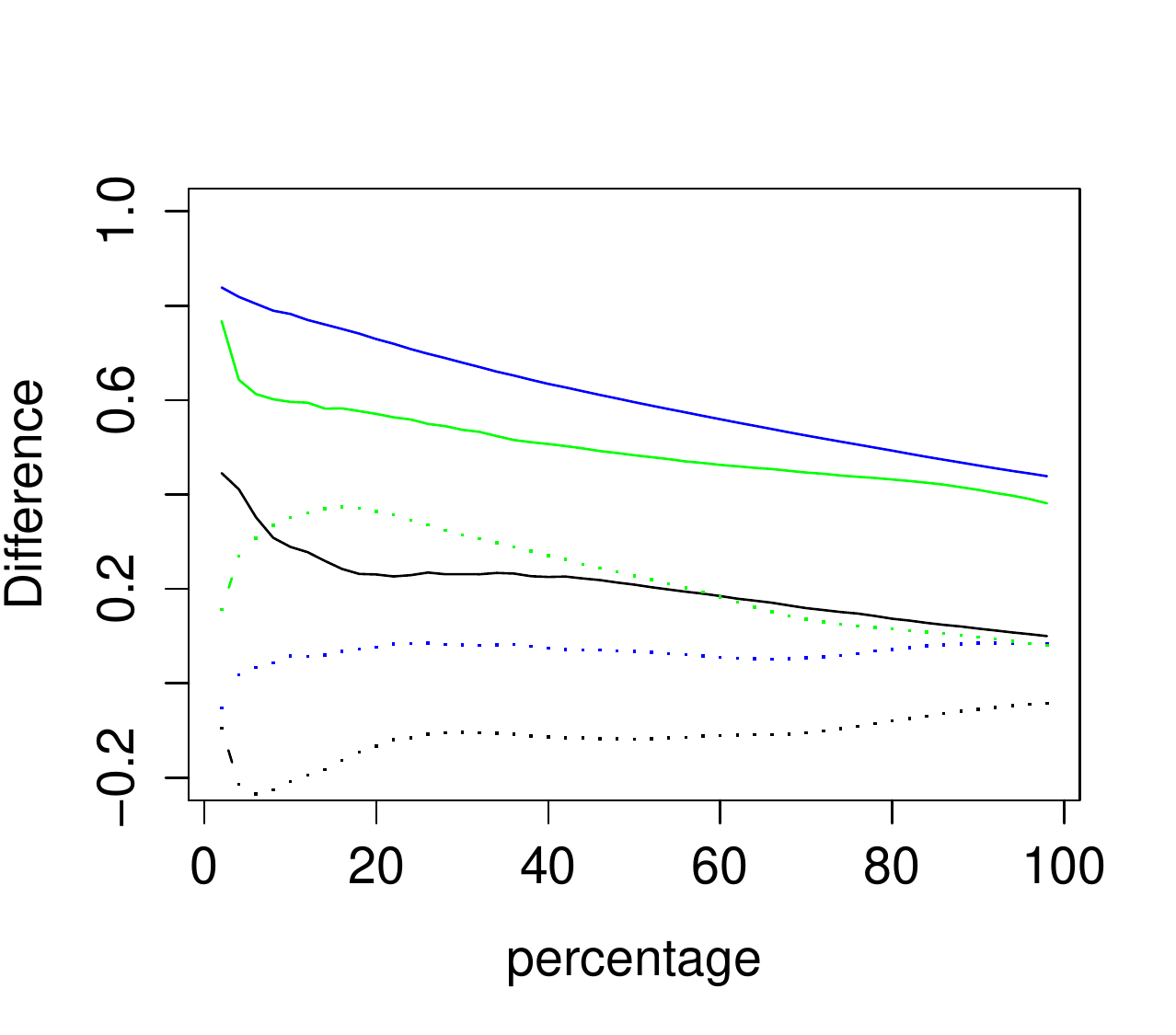}
\end{center}
\caption{Result of applying the test-retest procedure: $f_{1774}$ (black), $f_{1798}$ (blue) and $f_{3715}$ (green); for the proposed procedure  based on the random depth (continuous line) and the model based on FPCA (dotted line).}
\label{Dif}
\end{figure}


\subsection{Application to the selection of a representative subject of a dynamic neuroimaging dataset}\label{sectRep}
It is often useful to find a representative subject from a study. The dataset we use consists of nine 3D-images over time corresponding to six patients, three patients having being subject to two PET scans. The scans coming from one subject are in general  not independent. The estimator of the depth using our proposed notion is robust in this sense, thanks to the fulfillment of property P-6 that regards the  continuity in  probability (Theorem \ref{Td}). The methods used in this section are equally valid if the dataset consists of distinct patients.

In the left plot of Figure \ref{BA}, the raw data is displayed, particularly the sagittal view  for the $45$th slice of $Z$ and the $15$th time point. For each PET scan, we design an individual mask based on the thresholding procedure above. In the right plot of Figure \ref{BA} we  observe the image on the left after determining a mask. As can be seen, they all occupy different areas of the image, making direct naive comparison difficult.

\begin{figure}[htb]
\includegraphics[width=.4\linewidth]{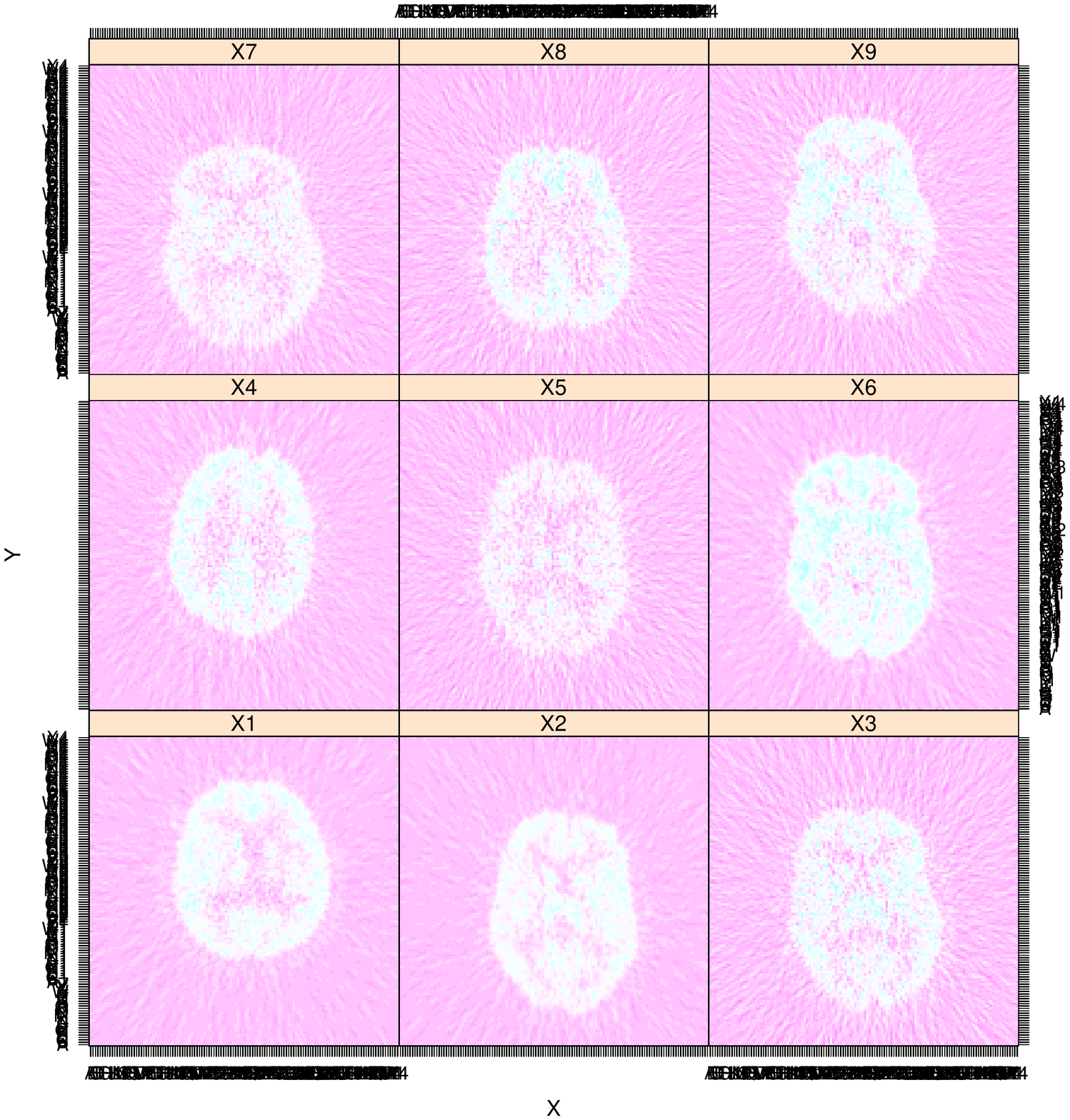} 
\includegraphics[width=.4\linewidth]{./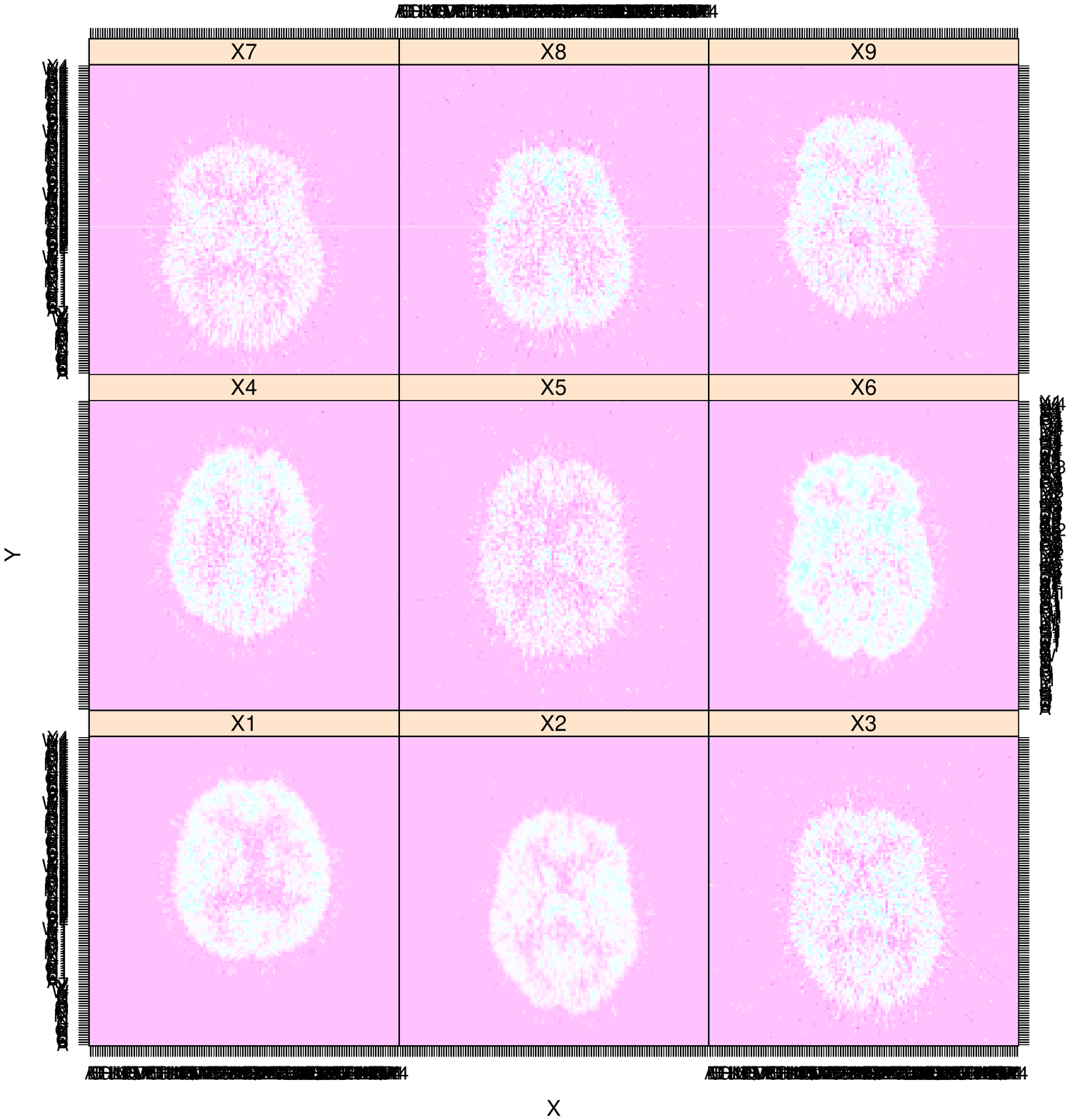} 
\caption{Sagittal view for the $45$th slice of $Z$ and the $15$th time point, for the nine elements of the dataset before applying a mask (left) and after applying it (right).}
\label{BA}
\end{figure}

For ease of presentation, after applying the mask, we denote each of the nine elements of the dataset  by $x_1, x_2, \ldots, x_9$. For each $j=1, \ldots, 9,$  $x_j(t)\in\RR$ for any $t\in I\subset \RR^4.$ 
As explained in Section \ref{sectionIn} in each PET image, a scan specific input function is required. To take into account this variation, each $x_j$ is normalised by the integral of the input function. We refer to this integral over time of the input as the intensity of the input.  In the left plot of Figure \ref{Input}, the input function of one of the scans ($x_1$) is shown. In  the right plot of the figure, we have plotted the image corresponding to those in Figure \ref{BA} but after also dividing each of the nine 3D images over time by its corresponding intensity of the input.

\begin{figure}[htb]
\includegraphics[height=7cm,width=.4\linewidth]{./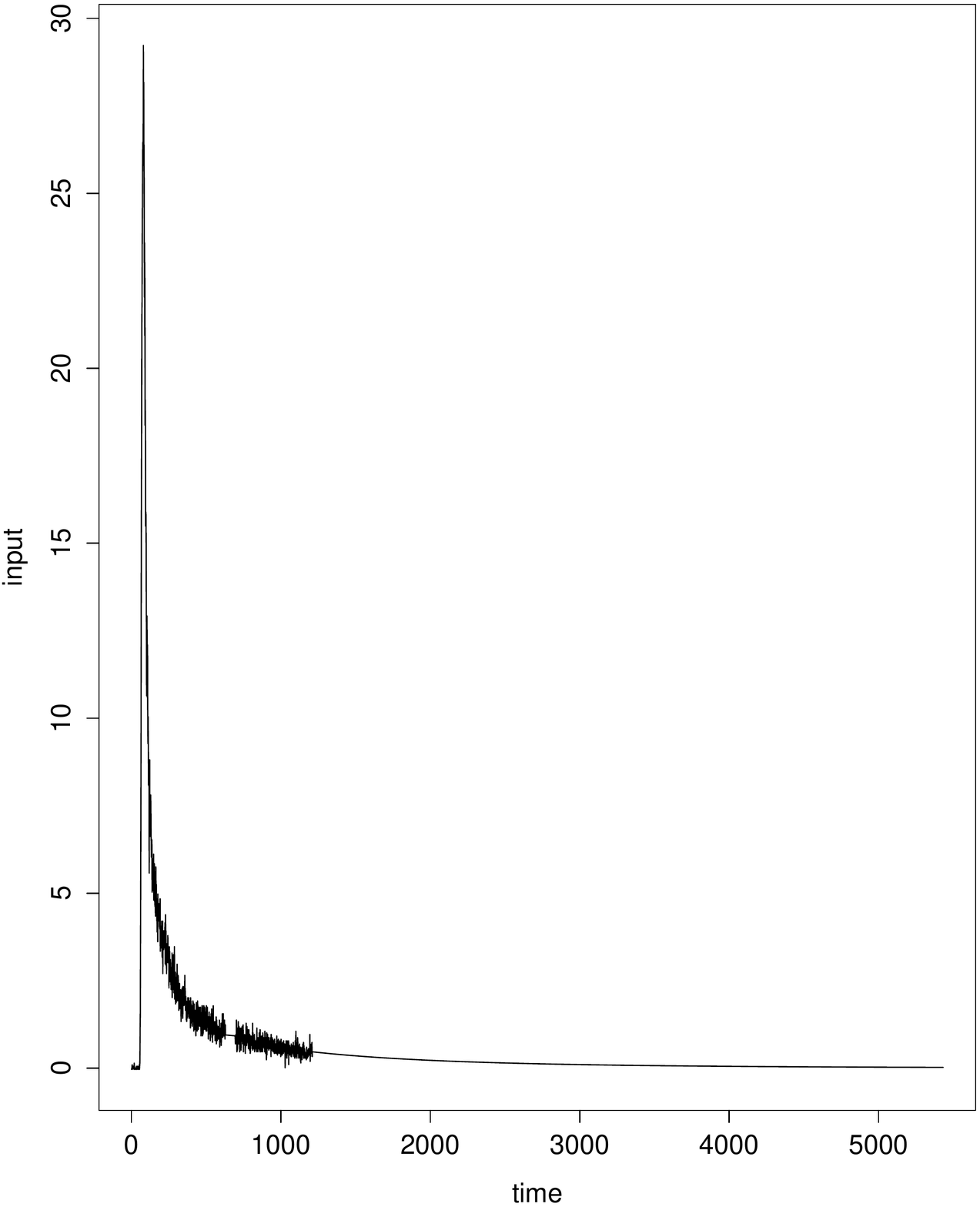}
\includegraphics[height=7.3cm,width=.4\linewidth]{./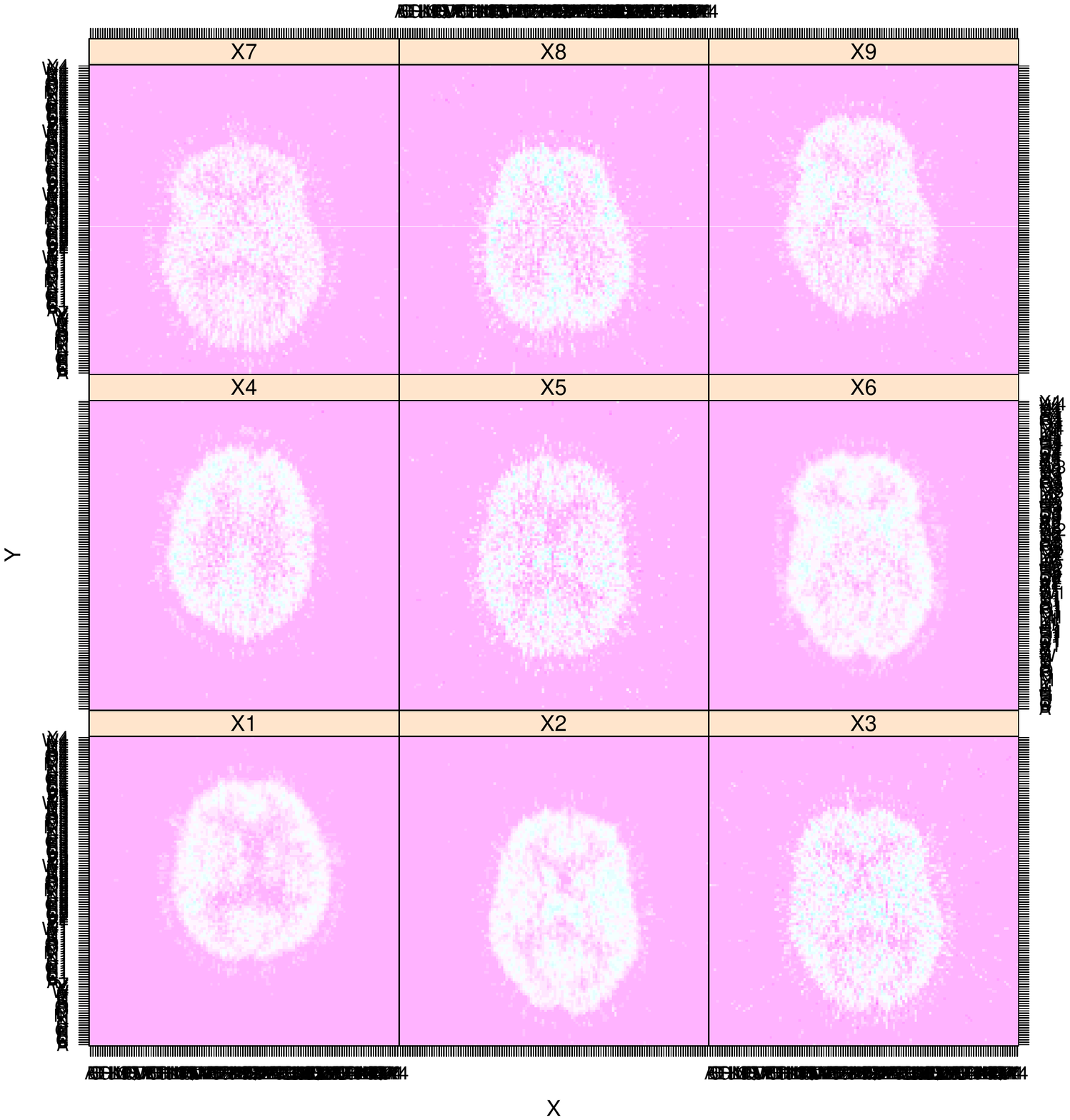}
\caption{Typical input function over time (left) and $XY$ perspective from Figure \ref{BA} after dividing the data by its corresponding integral of the input function (right).}
\label{Input}
\end{figure}

As commented in Section \ref{sectionFD}, our aim is to obtain a representative of the set by computing the metric depth function of $x_1,  \ldots, x_9$ with respect to the empirical distribution that corresponds to these nine functions,  using the pseudo-distance provided in Definition \ref{HD}. There is a difference between spatial dimensions, $X,$ $Y$ and $Z,$ and temporal one, $T.$ When deem appropriate, this difference is taken into consideration through the appropriate selection of metric. Here, however, the four dimensions have been considered equivalents.
We  obtain that $$(i_1, \ldots, i_9)=(.5224,.5934,.5015,.5659,.5131,.5153,.4109,.5662,.4651)$$ and so,
$i(x_7)< i(x_9)< i(x_3)< i(x_5)< i(x_6)< i(x_1)<i(x_4)< i(x_8)< i(x_2).$
In computing the median of this values we obtain that it corresponds to $x_6$ and, therefore, this is the deepest element, which can be thought of as a representative  PET image of the set. It turns out that this is the PET image studied in previous section, although the choice made there was due to it having previously been studied as the exemplar image in \cite{jiang2009smoothing, FPCA}, indicating that it was already somehow viewed as representative, even without a formal definition. It is also worth commenting that  $x_2=1680,$  $x_4=1794,$  and $x_8=3568$ are in the right side of  $x_6=2913,$ while their retest scans,  $x_3=1774,$  $x_5=1798,$  and  $x_9=3715,$ are on the left side, indicating that there may be a test-retest effect in this data.

%


\section{Conclusion}\label{sectionDisc}

We have presented both a theoretical and empirical investigation of notions of functional depth, particularly those which are computationally tractable in situations where there are a large number of curves, the so called \textit{big} functional data setting. We have introduced the notion of random depth, and shown that this has good performance relative to other functional depths both in simulations and in regard to an application of neuroimaging.

Random depth satisfies the six principles needed for a proper notion of functional depth and it is easily computed. This allows it to be used in settings where there are a large number of curves, in particular, as the choice of the size of subsets chosen can be tailored to the computational resources available.

In the application to PET data, it has been shown that a completely nonparametric idea of deconvolution based on depth can be both similar to a parametric interpretation yet be implemented without using any parametric assumptions. It can also be calculated on a voxelwise basis. The notion of functional depth can also be used to select a representative subject from an imaging data set in a principled fashion.

\vspace{12pt}

\bibliography{Biblio}

\begin{thebibliography}{}

\bibitem[\protect\citeauthoryear{Cannings and Samworth}{Cannings and
  Samworth}{2017}]{Samworth}
Cannings, T.~I. and R.~J. Samworth (2017).
\newblock Random-projection ensemble classification.
\newblock {\em Journal of the Royal Statistical Society: Series B (Statistical
  Methodology)\/}~{\em 79\/}(4), 959--1035.

\bibitem[\protect\citeauthoryear{Chakraborty and Chaudhuri}{Chakraborty and
  Chaudhuri}{2014}]{ChakrabortyAoS2014}
Chakraborty, A. and P.~Chaudhuri (2014).
\newblock The spatial distribution in infinite dimensional spaces and related
  quantiles and depths.
\newblock {\em Ann. Statist.\/}~{\em 42\/}(3), 1203--1231.

\bibitem[\protect\citeauthoryear{Chen, Goldsmith, and Ogden}{Chen
  et~al.}{2019}]{Goldsmith}
Chen, Y., J.~Goldsmith, and R.~T. Ogden (2019).
\newblock Functional data analysis of dynamic pet data.
\newblock {\em Journal of the American Statistical Association\/}~{\em
  114\/}(526), 595--609.
\newblock PMID: 31447493.

\bibitem[\protect\citeauthoryear{Claeskens, Hubert, Slaets, and
  Vakili}{Claeskens et~al.}{2014}]{HubertC}
Claeskens, G., M.~Hubert, L.~Slaets, and K.~Vakili (2014).
\newblock Multivariate functional halfspace depth.
\newblock {\em Journal of the American Statistical Association\/}~{\em
  109\/}(505), 411--423.

\bibitem[\protect\citeauthoryear{Cuesta-Albertos and
  Nieto-Reyes}{Cuesta-Albertos and Nieto-Reyes}{2010}]{LibroMieres}
Cuesta-Albertos, J. and A.~Nieto-Reyes (2010).
\newblock Functional classification and the random tukey depth. practical
  issues.
\newblock In C.~Borgelt, G.~González-Rodríguez, W.~Trutschnig, M.~Lubiano,
  M.~Gil, P.~Grzegorzewski, and O.~Hryniewicz (Eds.), {\em Combining Soft
  Computing and Statistical Methods in Data Analysis}, Volume~77 of {\em
  Advances in Intelligent and Soft Computing}, pp.\  123--130. Springer Berlin
  / Heidelberg.

\bibitem[\protect\citeauthoryear{Cuesta-Albertos and
  Nieto-Reyes}{Cuesta-Albertos and Nieto-Reyes}{2008}]{randomTukey}
Cuesta-Albertos, J.~A. and A.~Nieto-Reyes (2008).
\newblock The random {T}ukey depth.
\newblock {\em Comput. Statist. Data Anal.\/}~{\em 52\/}(11), 4979--4988.

\bibitem[\protect\citeauthoryear{Cuevas, Febrero, and Fraiman}{Cuevas
  et~al.}{2007}]{Cuevas07}
Cuevas, A., M.~Febrero, and R.~Fraiman (2007).
\newblock Robust estimation and classification for functional data via
  projection-based depth notions.
\newblock {\em Comput. Statist.\/}~{\em 22\/}(3), 481--496.

\bibitem[\protect\citeauthoryear{Cunningham and Jones}{Cunningham and
  Jones}{1993}]{CunnJ:93}
Cunningham, V.~J. and T.~Jones (1993).
\newblock Spectral analysis of dynamic {PET} studies.
\newblock {\em Journal of Cerebral Blood Flow and Metabolism\/}~{\em 13},
  15--23.

\bibitem[\protect\citeauthoryear{Dudley}{Dudley}{2002}]{rDudleyRAP}
Dudley, R.~M. (2002).
\newblock {\em Real analysis and probability}, Volume~74 of {\em Cambridge
  Studies in Advanced Mathematics}.
\newblock Cambridge University Press, Cambridge.
\newblock Revised reprint of the 1989 original.

\bibitem[\protect\citeauthoryear{Fraiman and Muniz}{Fraiman and
  Muniz}{2001}]{FraimanMuniz}
Fraiman, R. and G.~Muniz (2001).
\newblock Trimmed means for functional data.
\newblock {\em Test\/}~{\em 10\/}(2), 419--440.

\bibitem[\protect\citeauthoryear{Genton, Johnson, Potter, Stenchikov, and
  Sun}{Genton et~al.}{2014}]{Genton2014}
Genton, M.~G., C.~Johnson, K.~Potter, G.~Stenchikov, and Y.~Sun (2014).
\newblock Surface boxplots.
\newblock {\em Stat\/}~{\em 3\/}(1), 1--11.

\bibitem[\protect\citeauthoryear{Gunn, Gunn, and Cunningham}{Gunn
  et~al.}{2001}]{Gunn01}
Gunn, R.~N., S.~R. Gunn, and V.~J. Cunningham (2001).
\newblock Positron emission tomography compartmental models.
\newblock {\em Journal of Cerebral Blood Flow and Metabolism\/}~{\em 21\/}(6),
  635--52.

\bibitem[\protect\citeauthoryear{Hammers, Asselin, Turkheimer, Hinz, Osman,
  Hotton, Brooks, Duncan, and Koepp}{Hammers et~al.}{2007}]{hammers:2007}
Hammers, A., M.-C. Asselin, F.~E. Turkheimer, R.~Hinz, S.~Osman, G.~Hotton,
  D.~J. Brooks, J.~S. Duncan, and M.~J. Koepp (2007).
\newblock Balancing bias, reliability, noise properties and the need for
  parametric maps in quantitative ligand pet: [$^{11}${C}] diprenorphine
  test--retest data.
\newblock {\em Neuroimage\/}~{\em 38\/}(1), 82--94.

\bibitem[\protect\citeauthoryear{Jiang, Aston, and Wang}{Jiang
  et~al.}{2009}]{jiang2009smoothing}
Jiang, C.-R., J.~A. Aston, and J.-L. Wang (2009).
\newblock Smoothing dynamic positron emission tomography time courses using
  functional principal components.
\newblock {\em NeuroImage\/}~{\em 47\/}(1), 184--193.

\bibitem[\protect\citeauthoryear{Jiang, Aston, and Wang}{Jiang
  et~al.}{2016}]{FPCA}
Jiang, C.~R., J.~A. Aston, and J.~L. Wang (2016).
\newblock A functional approach to deconvolve dynamic neuroimaging data.
\newblock {\em J. Amer. Statist. Assoc.\/}~{\em 111\/}(513), 1--13.

\bibitem[\protect\citeauthoryear{Liu}{Liu}{1990}]{Liu90}
Liu, R.~Y. (1990).
\newblock On a notion of data depth based on random simplices.
\newblock {\em Ann. Statist.\/}~{\em 18\/}(1), 405--414.

\bibitem[\protect\citeauthoryear{L{\'o}pez-Pintado and Romo}{L{\'o}pez-Pintado
  and Romo}{2009}]{Romo09}
L{\'o}pez-Pintado, S. and J.~Romo (2009).
\newblock On the concept of depth for functional data.
\newblock {\em J. Amer. Statist. Assoc.\/}~{\em 104\/}(486), 718--734.

\bibitem[\protect\citeauthoryear{L{\'o}pez-Pintado and Romo}{L{\'o}pez-Pintado
  and Romo}{2011}]{Romo11}
L{\'o}pez-Pintado, S. and J.~Romo (2011).
\newblock A half-region depth for functional data.
\newblock {\em Comput. Statist. Data Anal.\/}~{\em 55\/}(4), 1679--1695.

\bibitem[\protect\citeauthoryear{L{\'o}pez-Pintado, Sun, Lin, and
  Genton}{L{\'o}pez-Pintado et~al.}{2014}]{MultFuBD}
L{\'o}pez-Pintado, S., Y.~Sun, J.~Lin, and M.~Genton (2014).
\newblock Simplicial band depth for multivariate functional data.
\newblock {\em Advances in Data Analysis and Classification\/}~{\em 8\/}(3),
  321--338.

\bibitem[\protect\citeauthoryear{Nieto-Reyes and Battey}{Nieto-Reyes and
  Battey}{2016}]{NRBattey2015}
Nieto-Reyes, A. and H.~Battey (2016).
\newblock A topologically valid definition of depth for functional data.
\newblock {\em Statist. Sci.\/}~{\em 31\/}(1), 61--79.

\bibitem[\protect\citeauthoryear{Nieto-Reyes and Battey}{Nieto-Reyes and
  Battey}{2017}]{rCorrectS}
Nieto-Reyes, A. and H.~Battey (2017).
\newblock {Correction to “A Topologically Valid Definition of Depth for
  Functional Data”}.
\newblock {\em Statistical Science\/}~{\em 32\/}(4), 640 -- 640.

\bibitem[\protect\citeauthoryear{Nieto-Reyes and Battey}{Nieto-Reyes and
  Battey}{2021}]{Nieto21a}
Nieto-Reyes, A. and H.~Battey (2021).
\newblock A topologically valid construction of depth for functional data.
\newblock {\em Journal of Multivariate Analysis\/}~{\em 184}, 104738.

\bibitem[\protect\citeauthoryear{Nieto-Reyes, Battey, and
  Francisci}{Nieto-Reyes et~al.}{2021}]{Nieto21b}
Nieto-Reyes, A., H.~Battey, and G.~Francisci (2021).
\newblock Functional symmetry and statistical depth for the analysis of
  movement patterns in alzheimer’s patients.
\newblock {\em Mathematics\/}~{\em 9\/}(8), 820.

\bibitem[\protect\citeauthoryear{O'Sullivan, Muzi, Spence, Mankoff, O'Sullivan,
  Fitzgerald, Newman, and Krohn}{O'Sullivan et~al.}{2009}]{OSullivan:09}
O'Sullivan, F., M.~Muzi, A.~M. Spence, D.~M. Mankoff, J.~N. O'Sullivan,
  N.~Fitzgerald, G.~C. Newman, and K.~A. Krohn (2009).
\newblock Nonparametric residue analysis of dynamic {PET} data with application
  to cerebral {FDG} studies in normals.
\newblock {\em Journal of the American Statistical Association\/}~{\em
  104\/}(486), 556--571.

\bibitem[\protect\citeauthoryear{Sun, Genton, and Nychka}{Sun
  et~al.}{2012}]{Genton}
Sun, Y., M.~G. Genton, and D.~W. Nychka (2012).
\newblock Exact fast computation of band depth for large functional datasets:
  How quickly can one million curves be ranked?
\newblock {\em Stat\/}~{\em 1\/}(1), 68--74.

\bibitem[\protect\citeauthoryear{Tukey}{Tukey}{1975}]{Tukey}
Tukey, J.~W. (1975).
\newblock Mathematics and the picturing of data.
\newblock In {\em Proceedings of the {I}nternational {C}ongress of
  {M}athematicians ({V}ancouver, {B}. {C}., 1974), {V}ol. 2}, pp.\  523--531.
  Canad. Math. Congress, Montreal, Que.

\bibitem[\protect\citeauthoryear{van~der Vaart and Wellner}{van~der Vaart and
  Wellner}{1996}]{rvv}
van~der Vaart, A.~W. and J.~A. Wellner (1996).
\newblock {\em Weak convergence and empirical processes}.
\newblock Springer Series in Statistics. Springer-Verlag, New York.
\newblock With applications to statistics.

\bibitem[\protect\citeauthoryear{Zanderigo, Parsey, and Ogden}{Zanderigo
  et~al.}{2015}]{zanderigo2015model}
Zanderigo, F., R.~V. Parsey, and R.~T. Ogden (2015).
\newblock Model-free quantification of dynamic pet data using nonparametric
  deconvolution.
\newblock {\em Journal of Cerebral Blood Flow \& Metabolism\/}~{\em 35\/}(8),
  1368--1379.

\bibitem[\protect\citeauthoryear{Zuo and Serfling}{Zuo and
  Serfling}{2000}]{Zuo00}
Zuo, Y. and R.~Serfling (2000).
\newblock General notions of statistical depth function.
\newblock {\em Ann. Statist.\/}~{\em 28\/}(2), 461--482.

\end{thebibliography}



\newpage
\setcounter{page}{1}
 \renewcommand{\thesection}{\Alph{section}}
\setcounter{section}{0}

\end{document}